\documentclass[journal]{IEEEtran}

\usepackage[T1]{fontenc}
\usepackage[utf8]{inputenc}
\usepackage{amsmath,amssymb,amsfonts,amsthm,mathtools}
\usepackage{bm}
\usepackage{cite}
\usepackage{booktabs}
\usepackage{graphicx}
\usepackage{enumitem}
\usepackage{url}
\usepackage{hyperref}
\usepackage{placeins}
\usepackage[colorinlistoftodos]{todonotes}
\hypersetup{hidelinks}

\newtheorem{theorem}{Theorem}

\newtheorem{proposition}{Proposition}
\newtheorem{corollary}{Corollary}
\newtheorem{assumption}{Assumption}
\newtheorem{definition}{Definition}
\newtheorem{problem}{Problem}
\newtheorem{remark}{Remark}

\newcommand{\R}{\mathbb{R}}
\newcommand{\E}{\mathbb{E}}
\newcommand{\He}{\operatorname{He}}
\newcommand{\diag}{\operatorname{diag}}
\newcommand{\Tr}{\operatorname{tr}}
\newcommand{\sat}{\operatorname{sat}}

\newcommand{\pf}{\operatorname{pf}}
\newcommand{\argmin}{\operatorname*{arg\,min}}

\newcommand{\norm}[1]{\left\lVert #1\right\rVert}

\newcommand{\mcA}{\mathcal{A}}
\newcommand{\mcC}{\mathcal{C}}

\newcommand{\mcF}{\mathcal{F}}

\newcommand{\mcU}{\mathcal{U}}
\newcommand{\mcX}{\mathcal{X}}

\title
{Implicit Neural Networks as Static
Controllers: Certificates and Performance Separation}

\author{Giuseppe C. Calafiore and Laurent El Ghaoui}

\begin{document}
\maketitle


\begin{abstract}
Implicit neural controllers (INCs) are static feedback laws that are evaluated through an algebraic fixed point {equation}; they include as special cases neural network controllers. We propose a so-called implicit representation of neural networks as a key enabling device that exposes the controller as a trainable linear interconnection closed through a known static activation map, thereby making well-posedness and Lyapunov/IQC analysis mathematically easy to handle. For finite-dimensional LTI plants, we first develop a rigorous analysis theory for a given INC, including Perron--Frobenius and norm conditions for well posedness, LMI/IQC certificates for exponential stability, and LMIs for discounted infinite-horizon quadratic performance. 
We then formulate synthesis as a certification-compatible heuristic search: training is carried out under explicit well-posedness constraints, implicit-differentiation formulas provide gradients, and the resulting controller is accepted only after independent post-training LMIs or regional admissibility checks are feasible.  Finally, we establish constrained-control separation results: for a specific scalar unstable plant with hard actuator bounds,  an INC achieves a strictly smaller discounted infinite-horizon cost than any admissible finite-order dynamic linear controller. Additional results cover quadratic state-input costs, comparison with linear static output feedback, and computable upper/lower-bound certificates. Numerical examples illustrate the mechanism and the resulting certified performance.
\end{abstract}

\begin{IEEEkeywords}
Neural network control, implicit models, stability, LTI systems, ReLU networks, constrained control, output feedback, linear matrix inequalities, integral quadratic constraints.
\end{IEEEkeywords}

\section{Introduction}
\IEEEPARstart{U}{sing} neural networks as controllers for dynamical systems has a long history. However, so far, providing explicit stability and performance guarantees for neural control remains a challenge. A feedback law obtained by simulation or automatic differentiation is not easy to analyze mathematically; this gap is the main obstacle addressed in this paper.

The difficulty is partly due to how neural networks are represented and parametrized. The standard layered notation for neural networks is ideal for implementation, but it does not easily lend itself to mathematical analysis. For feedback control synthesis, this is a serious limitation; therefore, a representation that simplifies the notation is the required mathematical entry point for certification-based neural controller synthesis.


A key message of the present paper is that implicit neural models, introduced in \cite{elghaoui2020implicit}, provide such a mathematically convenient representation of neural networks. A control law that uses such a model will be termed an implicit neural controller (INC). Rather than describing the neural map as a sequence of layers, an implicit model of a neural network defines a "hidden feature vector" as the solution of a fixed-point equation. The model maps an input $v$ to an output $z$ via the following equations:
\begin{subequations}
\label{eq:imodel_intro}
\begin{align}
        \eta &= \phi(A\eta+Bv), \\
        z &= C\eta+Dv,
\end{align}
\end{subequations}
where \(v\) is an external input, \(\eta\) is the "hidden feature vector," and \(\phi\) is a nonlinear "activation" vector-valued function such as a sigmoid, and matrices and vectors $A,B,C,D$ contain the controller parameters. Note that the "hidden" vector $\eta$ is only implicitly defined via a fixed-point equation. Perhaps surprisingly, the formalism contains standard feedforward deep networks as special cases, including convolutional, residual nets, LSTM, transformer and other architectures, see \cite{elghaoui2020implicit}. It also permits much more general network structures that have loops in the inference graph. The price of this additional modeling freedom is twofold. First, we need to guarantee that the solution to the fixed point equation exists and is unique, which will impose constraints on the network parameters. Second, the training problem is more complex, as gradient computation cannot be done via a simple backpropagation.

A control theorist will immediately observe that we recover the familiar state-space models of LTI systems, upon replacing the activation function $\phi$ by the integration operator in eq. \eqref{eq:imodel_intro}, see Figure~\ref{fig:inc_structure}. We also recover the classical models of uncertain systems used in robust control, when $\phi$ is replaced by an uncertainty block. 
This paper builds on this similarity to develop analysis and controller synthesis results for LTI systems with constraints. 


\begin{figure}[t]
\centering
\includegraphics[width=0.45\textwidth]
{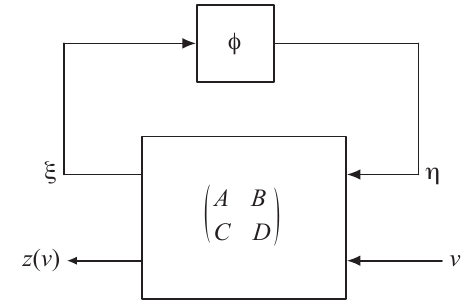}
\caption{State-space models for LTI systems and implicit models representations of neural networks use the same structural idea: a linear interconnection is closed through a  block $\phi$. In an LTI system the closing block $\phi$ is an integral or a unit delay operator, whereas in an implicit model $\phi$ is a static activation map. This similarity is what makes Lyapunov/IQC analysis for neural control possible.}
\label{fig:inc_structure}
\end{figure}


The paper is organized around three questions. First, given an implicit neural controller, can we certify that it is well posed, stabilizing, and performant for the given plant? Second, how can one design such a controller via a training algorithm? Third, can one prove that a simple neural static feedback law may provide a constrained-control advantage over linear feedback?

\subsection{Contributions}
\noindent
Our main message is that the implicit representation turns a static neural controller into a certificate-ready feedback interconnection, and that actuator constraints create settings in which neural feedback has a provable advantage over every admissible finite-order dynamic linear controller. The main contributions are as follows.

\begin{enumerate}[leftmargin=*]
\item \textbf{Implicit state-space representation and analysis.} Equipped with a static neural feedback formulated in the fixed-point form \eqref{eq:imodel_intro}, we derive Perron--Frobenius and norm-based well-posedness tests, Lipschitz bounds, and LMI/IQC certificates for exponential stability and discounted infinite-horizon quadratic performance.
 These certificates are first stated for the origin and zero biases. We then give an incremental version around an arbitrary equilibrium, thereby covering biased ReLU and saturation realizations. 
 
\item \textbf{Certification-compatible training algorithm.} We provide a nonconvex, heuristic training algorithm in which projected updates preserve well-posedness, implicit differentiation supplies gradients through a fixed point mechanism, and a train--certify--retrain gate accepts a controller only after a post-training LMI or independent regional check succeeds.

\item \textbf{Comparison with linear controllers.} We prove strict long-run constrained separation from all finite-order dynamic linear controllers for a scalar unstable plant, both for discounted state cost and for a range of quadratic state-input costs. We also prove separation from linear static output feedback, explain why no strict dominance statement can hold against arbitrary nonlinear static output feedback, and give computable upper/lower-bound certificates for general LTI comparisons.

\end{enumerate}
\subsection{Related Work}
Early neuro-control work studied neural identifiers, adaptive neural controllers, and neural approximations of nonlinear feedback laws for systems whose dynamics are hard to model exactly \cite{narendra1990identification,miller1995neural,lewis1999neural}. More recent surveys and tutorials emphasize that automatic differentiation and modern machine-learning software make both discrete-time and continuous-time neural controllers easier to optimize, including neural ODE controllers, actor-only model-based control, and comparisons with MPC \cite{boettcher2026control}. Deep reinforcement learning provides another broad route to neural feedback design, especially when accurate differentiable models are unavailable, but certification remains difficult for learned policies \cite{mnih2015human,degrave2022magnetic}.

A second line of work uses neural networks inside optimization and predictive-control architectures. Neural-network MPC dates back to early model-predictive schemes with neural plant models \cite{saintdonat1991neural,draeger1995mpc}, and recent differentiable-control layers include OptNet and differentiable MPC \cite{amos2017optnet,amos2018differentiablempc}. Neural ODEs and related differentiable dynamical models make it possible to learn or optimize through continuous-time dynamics \cite{chen2018neuralode}, while sparse identification and physics-informed or hybrid models have been used to support MPC in low-data regimes \cite{kaiser2018sindy}. These methods are primarily design tools; the present paper focuses on a complementary question, namely what can be certified once a neural feedback law has been fixed.

The implicit representation used here comes from implicit deep learning, where prediction rules are written as fixed-point equations and analyzed for well-posedness, robustness, sparsity, and trainability,  see \cite{elghaoui2020implicit}. It is also related to deep equilibrium models, differentiable optimization layers, and other implicit layers in which the forward pass is defined as the solution of an equation or an optimization problem \cite{bai2019deep,amos2017optnet}. Our contribution is to use this representation as a control-theoretic object: the neural controller is a linear interconnection closed through a known static nonlinearity.

The certification tools are rooted in absolute stability, Lyapunov theory, the S-procedure, linear matrix inequalities, and integral quadratic constraints \cite{boyd1994lmi,megretski1997iqc,khalil2002nonlinear,dullerud2000robust}. large modern literature applies convex relaxations, quadratic constraints, and semidefinite programming to verify neural networks and neural-network-controlled systems \cite{raghunathan2018sdp,fazlyab2022safety}. Closest in spirit are quadratic-constraint certificates for neural-network controllers \cite{yin2022qc} and dissipativity-constrained neural-controller synthesis \cite{junnarkar2024dissipativity}. Our approach follows the same Lyapunov/IQC philosophy, but it exploits the implicit controller form directly: the activation appears as an explicit sector-bounded block, while well-posedness of the hidden algebraic loop is certified separately.

Relative to this literature, the novelty is representation-centered. The controller is written in a realization form that exposes the hidden algebraic loop, gives a separate well-posedness test, permits hidden feedback during inference, and interfaces directly with plant-level Lyapunov certificates. The constrained separation theorem is a further analytical contribution: it identifies a hard-input mechanism by which saturated neural feedback can outperform globally admissible linear feedback.

Finally, the separation results are connected to constrained linear control,  invariant-set methods, and explicit MPC \cite{gilbert1991linear,tarbouriech2011stability,bemporad2002explicit,rawlings2017mpc}. Explicit MPC shows that constrained linear-quadratic problems naturally generate continuous piecewise-affine feedback laws. The present paper does not try to approximate MPC; instead, it isolates a simple analytical mechanism by which a saturated ReLU/INC law can be certified to outperform globally admissible linear feedback.

\subsection{Organization}
Section~\ref{sec:implicit_language} presents the implicit representation as a state-space language for neural controllers. Section~\ref{sec:problem} specializes it to the plant--INC interconnection. Section~\ref{sec:analysis} contains the first technical core of the paper, namely the analysis of a given INC: well-posedness, stability, and performance. Section~\ref{sec:training} describes a certification-compatible training framework. Section~\ref{sec:separation} contains the second technical core, namely separation and comparison results with linear controllers. Section~\ref{sec:numerics} gives numerical examples, and the final section concludes with limitations and open questions.

\subsection{Notation}
For a matrix \(M\), \(|M|\) denotes componentwise absolute value and \(\norm{M}_\infty=\max_i\sum_j |M_{ij}|\). If \(M\geq0\), \(\lambda_{\pf}(M)\) is its Perron--Frobenius eigenvalue. The notation \(P\succ0\) means that \(P\) is symmetric positive definite, and \(\He(Y)=Y+Y^\top\). The ReLU is \(\sigma(r)=\max\{0,r\}\), and the leaky ReLU with negative slope \(\ell\in[0,1]\) is \(\sigma_\ell(r)=\max\{r,\ell r\}\). The saturation is \(\sat(r)=\max\{-1,\min\{r,1\}\}\). For \(Q\succeq0\), \(\norm{x}_Q^2=x^\top Qx\).

\section{Implicit Neural Controllers}
\label{sec:implicit_language}
The implicit representation \eqref{eq:imodel_intro} has three ingredients: a linear interconnection parametrized by matrices; a fixed nonlinear map \(\phi\), and a hidden computational variable \(\eta\). This structure is the key feature that makes the model useful for control. The model's matrices are the trainable part; the activation is known; and the fixed-point equation exposes the exact algebraic loop that must be well posed before the map can be used as a feedback-control device.

Ordinary feedforward deep neural networks are special cases of our setup. For a layered network \(h_{\ell+1}=\phi_\ell(W_\ell h_\ell)\), one stacks the layer activations into \(\eta\). The resulting matrix \(W\) is strictly block triangular, so the fixed-point equation can be solved by backward substitution, which is just the usual forward pass. Hence feedforward architectures are implicit models with a nilpotent hidden interconnection. Allowing a more general matrix \(W\) removes the artificial requirement that hidden features be arranged in layers. The resulting model can contain loops in its inference graph, provided the fixed point remains unique.

This last proviso is essential:  in contrast with a feedforward network, an implicit neural model may have no fixed point or multiple fixed points for the same input. Well-posedness is therefore the neural analogue of requiring that an algebraic feedback interconnection define a unique input-output map. The Perron--Frobenius condition used below is a sufficient test that guarantees that the fixed point exists, is unique, and can be computed via fixed-point iteration.

The same representation also explains why sector and IQC tools enter naturally. For ReLU, leaky ReLU, saturated-linear, and related activations, the activation input-output pair \((z,\phi(z))\), equivalently \((z,\eta)\) on the graph \(\eta=\phi(z)\), satisfies pointwise quadratic inequalities. Once the controller is written as a linear interconnection plus \(\phi\), these inequalities become static IQCs for the closed-loop plant-controller system. The LMI conditions derived later are therefore the standard Lyapunov/IQC mechanism applied to an implicit neural controller.

In the remainder of the paper, \eqref{eq:imodel_intro}  is used with the plant measurement as input and the actuator signal as output. This specialization  gives a clean setting in which the representational issue, the well-posedness issue, and the constrained-performance advantage can all be stated and proved exactly.

\section{Closed-Loop Setting: Plant and INC Interconnection}
\label{sec:problem}
We now instantiate the representation of Section~\ref{sec:implicit_language} as a static feedback law for an LTI plant. 
%
Consider the discrete-time LTI plant
\begin{equation}
\label{eq:plant_dt}
        x_{k+1}=A_px_k+B_pu_k,\qquad y_k=C_px_k,
\end{equation}
where \(x_k\in\R^n\), \(u_k\in\R^m\), and \(y_k\in\R^p\). The full-state case is obtained by setting \(C_p=I\). We also consider the continuous-time counterpart
\begin{equation}
\label{eq:plant_ct}
        \dot x=A_px+B_pu,
        \qquad y=C_px.
\end{equation}
The static implicit controller is
\begin{subequations}
\label{eq:controller}
\begin{align}
        \eta_k &= \phi(W\eta_k+Uy_k+b), \label{eq:controller_fixed}\\
        u_k &= Ky_k+V\eta_k+d.          \label{eq:controller_output}
\end{align}
\end{subequations}
The hidden vector \(\eta_k\in\R^q\) is recomputed from \(y_k\) at each time. In Sections~\ref{sec:stability}--\ref{sec:performance}, we take \(b=d=0\) and assume \(\phi(0)=0\), so that the origin is a closed-loop equilibrium. Biases and nonzero equilibria are handled by the incremental certificate in Section~\ref{sec:incremental}; the scalar saturation realization in Appendix~\ref{app:saturation_relu} uses biases explicitly.

Figure~\ref{fig:plant_inc_interconnection} shows the resulting closed-loop interconnection. It also indicates the two roles played by the equations in this section. For analysis, the plant and the INC matrices are fixed and one asks whether the interconnection is well posed, stable, and performant. During training, the plant is fixed and the INC matrices are varied numerically; after training, however, the resulting controller is again treated as fixed and must pass the independent certificates.

\begin{figure}[t]
\centering
\includegraphics[width=\linewidth]{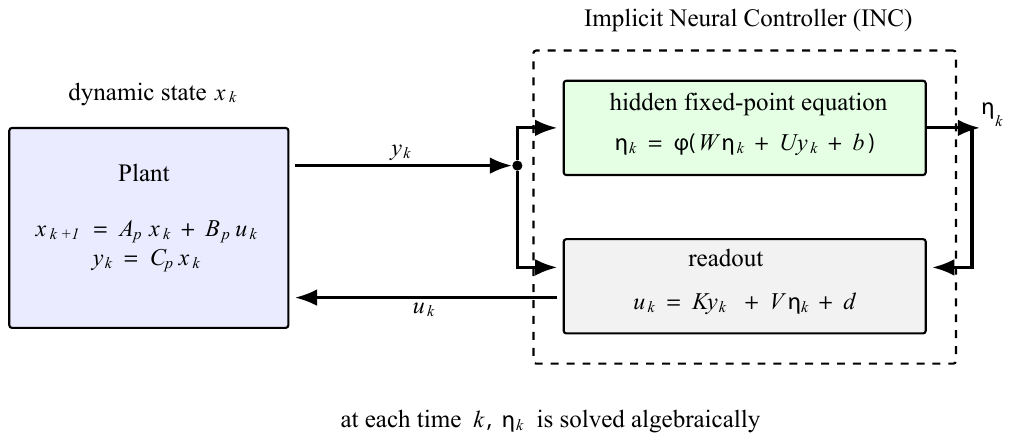}
\caption{Feedback interconnection of the LTI plant \eqref{eq:plant_dt} and the INC \eqref{eq:controller}. Observe that the hidden variable \(\eta_k\) is {\em not} a dynamic controller state; it is the solution of a static fixed-point equation at time \(k\).}
\label{fig:plant_inc_interconnection}
\end{figure}

\begin{assumption}[Componentwise nonexpansive activation]
\label{ass:cone}
The activation \(\phi:\R^q\to\R^q\) acts componentwise and satisfies
\begin{equation}
\label{eq:cone}
        |\phi(r)-\phi(s)|\leq |r-s|,
        \qquad \forall r,s\in\R^q,
\end{equation}
with componentwise absolute values and inequalities.
\end{assumption}

ReLU, leaky ReLU with slope in \([0,1]\), tanh, and saturated-linear activations satisfy Assumption~\ref{ass:cone}. The logistic sigmoid is nonexpansive after scaling but is not origin preserving unless it is centered; such biased or nonzero-equilibrium uses are covered by the incremental condition below.

\begin{assumption}[Static sector condition]
\label{ass:sector}
The activation satisfies \(\phi(0)=0\). There exist diagonal matrices
\begin{equation}
\begin{aligned}
        D_0&=\diag(\underline m_i),\qquad
        D_1=\diag(\overline m_i),\\
        0&\leq \underline m_i\leq \overline m_i\leq1.
\end{aligned}
\end{equation}
such that, for every component,
\begin{equation}
\label{eq:sector_general}
        \big(\phi_i(z_i)-\underline m_i z_i\big)
        \big(\overline m_i z_i-\phi_i(z_i)\big)\geq0,
        \qquad i=1,\ldots,q.
\end{equation}
\end{assumption}

Assumption~\ref{ass:sector} is the componentwise sector \([D_0,D_1]\). The special case \(D_0=0\), \(D_1=I\) is the usual sector \([0,1]\) used for ReLU and saturation. A leaky ReLU with negative slope \(\ell\) satisfies it with \(D_0=\ell I\) and \(D_1=I\).

\begin{assumption}[Incremental sector condition]
\label{ass:incremental_sector}
There exist the same diagonal matrices \(D_0,D_1\) as in Assumption~\ref{ass:sector} such that, for all \(r,s\in\R^q\),
\begin{equation}
\label{eq:incremental_sector}
\begin{aligned}
&\big(\phi_i(r_i)-\phi_i(s_i)-\underline m_i(r_i-s_i)\big)\\
&\quad\times
\big(\overline m_i(r_i-s_i)-\phi_i(r_i)+\phi_i(s_i)\big)\geq0,
\end{aligned}
\end{equation}
for \(i=1,\ldots,q\).
\end{assumption}

The incremental condition holds for componentwise slope-restricted nonlinearities. It is the version needed when biases are present or when the certified equilibrium is not the origin.
\section{Analysis of a Given INC}
\label{sec:analysis}
The first key issue is analysis. Once the plant and controller matrices are fixed, the controller must define a unique static map, and the closed loop must admit verifiable stability and performance guarantees. This section treats these three questions in order.

\subsection{Well-Posedness of the Controller Map}
\label{sec:wellposed}
The equation \(\eta=\phi(W\eta+Uy+b)\) must define a unique \(\eta\) for each measurement. The next theorem is the controller version of the Perron--Frobenius fixed-point test for implicit models.

\begin{definition}[Well-posed implicit controller]
The matrix \(W\) is well posed for \(\phi\) if, for every \(c\in\R^q\), the equation
\begin{equation}
        \eta=\phi(W\eta+c)
\end{equation}
has a unique solution.
\end{definition}

\begin{theorem}[Perron--Frobenius well-posedness]
\label{thm:wellposed}
Suppose Assumption~\ref{ass:cone} holds. If
\begin{equation}
\label{eq:pf_condition}
        \lambda_{\pf}(|W|)<1,
\end{equation}
then the fixed-point equation in \eqref{eq:controller_fixed} has a unique solution for every \(y\) and every bias \(b\). Moreover, for the corresponding hidden map \(\eta(y)\),
\begin{equation}
\label{eq:hidden_lipschitz}
        |\eta(y_1)-\eta(y_2)|
        \leq (I-|W|)^{-1}|U|\,|y_1-y_2|.
\end{equation}
Consequently, the controller map \(\pi(y)=Ky+V\eta(y)+d\) is globally Lipschitz and
\begin{equation}
\label{eq:controller_lipschitz}
        |\pi(y_1)-\pi(y_2)|
        \leq \left(|K|+|V|(I-|W|)^{-1}|U|\right)|y_1-y_2|.
\end{equation}
\end{theorem}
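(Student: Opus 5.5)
The plan is to treat the fixed-point map \(T_c(\eta)=\phi(W\eta+c)\), with \(c=Uy+b\), as a contraction in a suitably weighted \(\infty\)-norm. Three facts drive the argument. By Assumption~\ref{ass:cone}, \(|T_c(\eta)-T_c(\eta')|\le|W(\eta-\eta')|\le|W|\,|\eta-\eta'|\) componentwise. Since \(\lambda_{\pf}(|W|)<1\), for any \(\epsilon>0\) small enough that \(\lambda_{\pf}(|W|+\epsilon\mathbf{1}\mathbf{1}^\top)<1\), the strictly positive matrix \(|W|+\epsilon\mathbf{1}\mathbf{1}^\top\) has a strictly positive Perron eigenvector \(w\). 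This \(w\) satisfies \(|W|w\le\lambda w\) with \(\lambda<1\). (Alternatively, take \(w=(I-|W|)^{-1}\mathbf{1}\), which is positive since \((I-|W|)^{-1}=\sum_k|W|^k\ge I\), and gives \(|W|w=w-\mathbf{1}\le(1-\min_i 1/w_i)w\).)

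Define the weighted norm \(\|\xi\|_w=\max_i|\xi_i|/w_i\). The componentwise bound then gives \(\|T_c\eta-T_c\eta'\|_w\le\lambda\|\eta-\eta'\|_w\). By Banach's fixed-point theorem on the complete space \((\R^q,\|\cdot\|_w)\), \(T_c\) has a unique fixed point for every \(c\), hence for every \(y\) and every \(b\), and the Picard iteration converges to it.

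For the Lipschitz bound \eqref{eq:hidden_lipschitz}, let \(\eta_j=\eta(y_j)\). Subtracting the two fixed-point equations and applying Assumption~\ref{ass:cone} gives
\(|\eta_1-\eta_2|\le|W|\,|\eta_1-\eta_2|+|U|\,|y_1-y_2|\).
Write \(\delta=|\eta_1-\eta_2|\ge0\) and \(g=|U||y_1-y_2|\ge0\), so that \((I-|W|)\delta\le g\). The key step is inverting this inequality while preserving order. Because \(\lambda_{\pf}(|W|)<1\), the Neumann series \((I-|W|)^{-1}=\sum_{k\ge0}|W|^k\) converges and is entrywise nonnegative. Iterating \(\delta\le|W|\delta+g\) gives \(\delta\le|W|^N\delta+\sum_{k<N}|W|^kg\), using nonnegativity of \(|W|\) at each step. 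Since \(|W|^N\to0\), letting \(N\to\infty\) yields \(\delta\le(I-|W|)^{-1}g\), which is \eqref{eq:hidden_lipschitz}.

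Finally, \(|\pi(y_1)-\pi(y_2)|\le|K||y_1-y_2|+|V||\eta_1-\eta_2|\) by the triangle inequality. Substituting the hidden bound, which is legitimate because \(|V|\ge0\) preserves the componentwise inequality, gives \eqref{eq:controller_lipschitz}. Global Lipschitz continuity in any norm follows by taking norms of this nonnegative matrix bound.

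The only delicate point is that all manipulations are componentwise. Order preservation therefore requires nonnegativity of \(|W|\), \(|V|\) and \((I-|W|)^{-1}\), and the latter follows from the Neumann series under the PF condition. The choice of weighted norm is what converts the spectral-radius condition into a genuine contraction, since \(\|W\|_\infty\) itself need not be less than \(1\).
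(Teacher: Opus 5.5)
Your proof is correct and follows the same route as the paper. The paper also shows that \(\eta\mapsto\phi(W\eta+c)\) is a contraction in a weighted \(\infty\)-norm, subtracts the two fixed-point equations, and uses nonnegativity of \((I-|W|)^{-1}=\sum_j|W|^j\). You make explicit two steps the paper leaves implicit: the construction of the positive weight vector \(w\), and the order-preserving inversion. For the latter, the paper simply multiplies \((I-|W|)\delta\le g\) by the nonnegative inverse, which is equivalent to your iteration argument.
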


\begin{proof}
For fixed \(y\), set \(c=Uy+b\). By Assumption~\ref{ass:cone},
\[
 |\phi(W\eta_1+c)-\phi(W\eta_2+c)|\leq |W|\,|\eta_1-\eta_2|.
\]
Since \(\lambda_{\pf}(|W|)<1\), the map is a contraction in a suitable weighted infinity norm. Hence the fixed point exists and is unique. For two measurements, subtracting the two fixed-point equations gives
\[
 |\eta(y_1)-\eta(y_2)|\leq |W|\,|\eta(y_1)-\eta(y_2)|+|U|\,|y_1-y_2|.
\]
The matrix \((I-|W|)^{-1}=\sum_{j=0}^{\infty}|W|^j\) is componentwise nonnegative, which proves \eqref{eq:hidden_lipschitz}. The output bound follows immediately from \eqref{eq:controller_output}.
\end{proof}

\begin{remark}[Convex sufficient condition]
Since \(\lambda_{\pf}(|W|)\leq\norm{W}_\infty\), the row-sum constraint
\begin{equation}
        \norm{W}_\infty\leq\kappa<1
\end{equation}
is a convex sufficient condition for well-posedness. This condition is particularly convenient in projected-gradient training.
\end{remark}

\begin{remark}[Feedforward networks as implicit controllers]
Any feedforward ReLU controller can be written as \eqref{eq:controller}: stack the layer activations into \(\eta\), and let \(W\) be strictly block triangular. Then \(\lambda_{\pf}(|W|)=0\), so the controller is well posed. Thus the implicit class contains standard static ReLU controllers and extends them by allowing well-posed hidden feedback.
\end{remark}

\subsection{Stability Certificates}
\label{sec:stability}
We now set \(b=d=0\). Define the stacked variable
\begin{equation}
\label{eq:s_stack}
        s=\begin{bmatrix}x\\ \eta\end{bmatrix},\qquad
        X=\begin{bmatrix}I_n&0\end{bmatrix},\qquad
        H=\begin{bmatrix}0&I_q\end{bmatrix}.
\end{equation}
Then \(x=Xs\), \(\eta=Hs\), and
\begin{equation}
\label{eq:Z_def}
        z=W\eta+UC_px=Zs,
        \qquad Z=\begin{bmatrix}UC_p&W\end{bmatrix}.
\end{equation}
The discrete-time closed-loop state update is
\begin{equation}
\label{eq:F_def}
        x^+=Fs,
        \qquad
        F=\begin{bmatrix}A_p+B_pKC_p&B_pV\end{bmatrix},
\end{equation}
and the control is
\begin{equation}
\label{eq:G_def}
        u=Gs,
        \qquad
        G=\begin{bmatrix}KC_p&V\end{bmatrix}.
\end{equation}

For a diagonal multiplier \(\Lambda\succeq0\), define
\begin{equation}
\label{eq:M_lambda}
\begin{aligned}
        M_\Lambda={}&-2Z^\top D_0\Lambda D_1Z
        +\He\!\left(H^\top\Lambda(D_0+D_1)Z\right)\\
        &\quad -2H^\top\Lambda H .
\end{aligned}
\end{equation}
For every \(s\) satisfying \(\eta=\phi(z)\), Assumption~\ref{ass:sector} gives
\begin{equation}
\label{eq:iqc_nonnegative}
        s^\top M_\Lambda s
        =2(\eta-D_0z)^\top\Lambda(D_1z-\eta)\geq0.
\end{equation}
This is a static IQC for the graph of the activation. Equation~\eqref{eq:M_lambda} reduces to the usual \([0,1]\)-sector multiplier when \(D_0=0\) and \(D_1=I\).

\begin{theorem}[Discrete-time exponential stability]
\label{thm:dt_stability}
Suppose Assumptions~\ref{ass:cone} and \ref{ass:sector} hold and \(\lambda_{\pf}(|W|)<1\). If there exist \(P\succ0\), a diagonal \(\Lambda\succeq0\), and \(\rho\in[0,1)\) such that
\begin{equation}
\label{eq:dt_stability_lmi}
        F^\top P F-\rho^2X^\top PX+M_\Lambda\preceq0,
\end{equation}
then the origin of \eqref{eq:plant_dt} in feedback with the bias-free controller is globally exponentially stable. More precisely,
\begin{equation}
        x_k^\top Px_k\leq \rho^{2k}x_0^\top Px_0,
        \qquad k=0,1,2,\ldots .
\end{equation}
\end{theorem}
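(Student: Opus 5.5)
The plan is the standard Lyapunov/S-procedure argument along closed-loop trajectories, with well-posedness supplying the hidden variable at each step.

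\textbf{Well-posed trajectories.} Since $\lambda_{\pf}(|W|)<1$ and Assumption~\ref{ass:cone} holds, Theorem~\ref{thm:wellposed} (with $b=0$) gives, for each $x_k$, a unique hidden vector $\eta_k=\eta(C_px_k)$. The closed loop is therefore a well-defined map $x_k\mapsto x_{k+1}$. Set $s_k=[x_k^\top\ \eta_k^\top]^\top$. By \eqref{eq:Z_def}--\eqref{eq:G_def}, with $b=d=0$,
\[
x_{k+1}=Fs_k,\qquad x_k=Xs_k,\qquad \eta_k=\phi(Zs_k).
\]

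\textbf{Quadratic form along the trajectory.} Evaluate \eqref{eq:dt_stability_lmi} as a quadratic form at $s_k$:
\[
x_{k+1}^\top Px_{k+1}-\rho^2x_k^\top Px_k+s_k^\top M_\Lambda s_k\le 0 .
\]
Because $\eta_k=\phi(z_k)$ with $z_k=Zs_k$, Assumption~\ref{ass:sector} and \eqref{eq:iqc_nonnegative} give $s_k^\top M_\Lambda s_k\ge0$; this uses that $\Lambda$ is diagonal and nonnegative, so the sum of componentwise sector products stays nonnegative. Dropping this term yields $V(x_{k+1})\le\rho^2V(x_k)$ with $V(x)=x^\top Px$. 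Induction then gives $x_k^\top Px_k\le\rho^{2k}x_0^\top Px_0$.

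\textbf{Exponential stability.} Using $\lambda_{\min}(P)\|x\|^2\le V(x)\le\lambda_{\max}(P)\|x\|^2$,
\[
\|x_k\|\le\sqrt{\lambda_{\max}(P)/\lambda_{\min}(P)}\,\rho^k\|x_0\|,
\]
which is global exponential stability since $\rho<1$. The origin is an equilibrium because $\phi(0)=0$ and uniqueness give $\eta(0)=0$. If exponential decay of $\eta_k$ and $u_k$ is also wanted, the Lipschitz bound \eqref{eq:hidden_lipschitz} gives $|\eta_k|\le(I-|W|)^{-1}|U||C_px_k|$, so both decay at rate $\rho^k$.

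\textbf{Main subtlety.} The only real point is verifying identity \eqref{eq:iqc_nonnegative}: expanding $2(\eta-D_0z)^\top\Lambda(D_1z-\eta)$ with $z=Zs$, $\eta=Hs$ must reproduce $M_\Lambda$. This works because diagonal matrices commute, so the cross terms symmetrize into $\He(H^\top\Lambda(D_0+D_1)Z)$. The second requirement is that the LMI is used only on the graph of $\phi$, and well-posedness guarantees that the trajectory lies on that graph. Everything else is routine.
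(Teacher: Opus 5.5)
Your proof is correct and follows the paper's argument: well-posedness from Theorem~\ref{thm:wellposed}, the LMI evaluated as a quadratic form at $s_k$, the IQC \eqref{eq:iqc_nonnegative} (which holds on the activation graph) to drop $s_k^\top M_\Lambda s_k$, and iteration of $V_P(x_{k+1})\le\rho^2 V_P(x_k)$. Your additions are correct details the paper leaves implicit: the conversion to a norm bound via the extreme eigenvalues of $P$, the observation that $\eta(0)=0$, and the check that expanding $2(\eta-D_0z)^\top\Lambda(D_1z-\eta)$ reproduces $M_\Lambda$.
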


\begin{proof}
Well-posedness follows from Theorem~\ref{thm:wellposed}. Let \(V_P(x)=x^\top Px\). Along the closed-loop graph,
\[
 V_P(x^+)-\rho^2V_P(x)=s^\top(F^\top PF-\rho^2X^\top PX)s.
\]
By \eqref{eq:dt_stability_lmi} and \eqref{eq:iqc_nonnegative},
\[
        V_P(x^+)-\rho^2V_P(x)\leq -s^\top M_\Lambda s\leq0.
\]
Iteration proves the claim.
\end{proof}

For continuous time, define
\begin{equation}
\label{eq:Fc_def}
        F_c=\begin{bmatrix}A_p+B_pKC_p&B_pV\end{bmatrix},
        \qquad \dot x=F_cs.
\end{equation}

\begin{theorem}[Continuous-time exponential stability]
\label{thm:ct_stability}
Suppose Assumptions~\ref{ass:cone} and \ref{ass:sector} hold and \(\lambda_{\pf}(|W|)<1\). If there exist \(P\succ0\), diagonal \(\Lambda\succeq0\), and \(\alpha>0\) such that
\begin{equation}
\label{eq:ct_stability_lmi}
        \He(X^\top PF_c)+\alpha X^\top PX+M_\Lambda\preceq0,
\end{equation}
then the origin of \eqref{eq:plant_ct} in feedback with the bias-free controller is globally exponentially stable and
\begin{equation}
        x(t)^\top Px(t)\leq e^{-\alpha t}x(0)^\top Px(0).
\end{equation}
\end{theorem}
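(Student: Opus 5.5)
The argument follows the discrete-time one in Theorem~\ref{thm:dt_stability}, with the difference of Lyapunov values replaced by a derivative and the iteration replaced by a comparison lemma.

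\textbf{Well-posedness and existence of solutions.} By Theorem~\ref{thm:wellposed} with $b=0$, for each $x$ the hidden equation $\eta=\phi(W\eta+UC_px)$ has a unique solution $\eta(x)$, and $x\mapsto\eta(x)$ is globally Lipschitz by \eqref{eq:hidden_lipschitz}. The closed-loop vector field $x\mapsto (A_p+B_pKC_p)x+B_pV\eta(x)=F_cs(x)$, with $s(x)=[x^\top\ \eta(x)^\top]^\top$, is therefore globally Lipschitz. Hence for every $x(0)$ there is a unique solution, defined on $[0,\infty)$ and continuously differentiable. Since $\phi(0)=0$, uniqueness gives $\eta(0)=0$, so the origin is an equilibrium.

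\textbf{Dissipation inequality along trajectories.} Take $V_P(x)=x^\top Px$. Along any solution, with $s=s(x(t))$,
\[
\dot V_P = 2x^\top P\dot x = 2 (Xs)^\top P F_c s = s^\top \He(X^\top P F_c)\, s .
\]
Multiply \eqref{eq:ct_stability_lmi} on the left by $s^\top$ and on the right by $s$. Because $s$ lies on the activation graph $\eta=\phi(z)$ with $z=Zs$, \eqref{eq:iqc_nonnegative} gives $s^\top M_\Lambda s\ge 0$. Together these yield
\[
\dot V_P \le -\alpha\, s^\top X^\top P X s - s^\top M_\Lambda s \le -\alpha V_P .
\]

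\textbf{Conclusion.} The function $t\mapsto e^{\alpha t}V_P(x(t))$ is $C^1$ with nonpositive derivative, so it is nonincreasing. This gives $x(t)^\top Px(t)\le e^{-\alpha t}x(0)^\top Px(0)$. Since $P\succ0$, we have $\lambda_{\min}(P)\|x(t)\|^2\le e^{-\alpha t}\lambda_{\max}(P)\|x(0)\|^2$, which is global exponential stability with rate $\alpha/2$ in the Euclidean norm.

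The main point needing care is the first step: the solution must exist globally and be unique before the Lyapunov computation is valid. Theorem~\ref{thm:wellposed} supplies this through global Lipschitz continuity of the implicitly defined feedback. Everything after that is the sector IQC combined with the LMI.
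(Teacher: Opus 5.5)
Your proof is correct and follows the paper's argument: compute $\dot V_P=s^\top\He(X^\top PF_c)s$, combine the LMI with the sector IQC \eqref{eq:iqc_nonnegative} to get $\dot V_P\le-\alpha V_P$, and integrate; your integrating-factor step is the Gronwall argument the paper cites. Your extra step is also handled correctly: you derive global existence and uniqueness of closed-loop solutions from the Lipschitz bound in Theorem~\ref{thm:wellposed}, a point the paper leaves implicit.
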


\begin{proof}
The derivative of \(V_P\) satisfies \(\dot V_P=s^\top\He(X^\top PF_c)s\). The result follows from \eqref{eq:ct_stability_lmi}, \eqref{eq:iqc_nonnegative}, and Gronwall's inequality.
\end{proof}

\begin{remark}[Certification versus design]
For fixed controller matrices and fixed decay rate, \eqref{eq:dt_stability_lmi} and \eqref{eq:ct_stability_lmi} are LMIs in \((P,\Lambda)\). Joint optimization over controller parameters and Lyapunov variables is nonconvex, as expected for nonlinear controller design.
\end{remark}

\subsection{Discounted Quadratic Performance}
\label{sec:performance}
Let \(Q\succeq0\), \(R\succeq0\), and \(\beta\in(0,1]\). Consider the discounted cost
\begin{equation}
\label{eq:disc_cost}
        J_\beta(x_0)=\sum_{k=0}^{\infty}\beta^k
        \left(x_k^\top Qx_k+u_k^\top Ru_k\right).
\end{equation}

\begin{theorem}[Discounted performance certificate]
\label{thm:performance}
Suppose Assumptions~\ref{ass:cone} and \ref{ass:sector} hold and \(\lambda_{\pf}(|W|)<1\). If there exist \(P\succ0\) and diagonal \(\Lambda\succeq0\) such that
\begin{equation}
\label{eq:performance_lmi}
        \beta F^\top P F-X^\top P X+X^\top QX+G^\top RG+M_\Lambda\preceq0,
\end{equation}
then every closed-loop trajectory satisfies
\begin{equation}
\label{eq:cost_bound}
        J_\beta(x_0)\leq x_0^\top Px_0.
\end{equation}
Consequently, if \(S_0=\E[x_0x_0^\top]\) is the second-moment matrix of the initial condition, then
\begin{equation}
        \E[J_\beta(x_0)]\leq \Tr(PS_0).
\end{equation}
Equivalently, if \(x_0\) has mean \(\mu_0\) and centered covariance \(\Sigma_0\), the bound is \(\Tr(P\Sigma_0)+\mu_0^\top P\mu_0\).
\end{theorem}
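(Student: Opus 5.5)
The plan is a discounted Lyapunov dissipation argument combined with a telescoping sum, mirroring the proof of Theorem~\ref{thm:dt_stability}. First, Theorem~\ref{thm:wellposed} shows that the bias-free closed loop is well posed. Hence for every $x_0$ the trajectory $(x_k,\eta_k,u_k)$ is uniquely defined, with $s_k=[x_k^\top\ \eta_k^\top]^\top$, $x_{k+1}=Fs_k$, $u_k=Gs_k$ and $\eta_k=\phi(Zs_k)$. Set $V_P(x)=x^\top Px$. Evaluating the quadratic form of \eqref{eq:performance_lmi} at $s_k$ gives
\[
\beta V_P(x_{k+1})-V_P(x_k)+x_k^\top Qx_k+u_k^\top Ru_k+s_k^\top M_\Lambda s_k\leq 0 .
\]
Because $s_k$ lies on the graph of $\phi$, \eqref{eq:iqc_nonnegative} gives $s_k^\top M_\Lambda s_k\geq0$, and therefore
\[
x_k^\top Qx_k+u_k^\top Ru_k\leq V_P(x_k)-\beta V_P(x_{k+1}).
\]

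Next, I multiply this per-step inequality by $\beta^k$ and sum over $k=0,\dots,N-1$. The right-hand side telescopes:
\[
\sum_{k=0}^{N-1}\beta^k\left(x_k^\top Qx_k+u_k^\top Ru_k\right)\leq V_P(x_0)-\beta^N V_P(x_N)\leq x_0^\top Px_0 ,
\]
where the last step uses $P\succ0$. The partial sums have nonnegative terms and are bounded uniformly in $N$, so they converge as $N\to\infty$ and yield \eqref{eq:cost_bound}. This argument needs neither $\beta<1$ nor convergence of $x_k$; it only drops the nonnegative tail term $\beta^NV_P(x_N)$. The case $\beta=1$ is therefore covered as well.

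For the stochastic statement, I apply \eqref{eq:cost_bound} pointwise in $x_0$ and take expectations. Monotonicity of expectation gives $\E[J_\beta(x_0)]\leq\E[x_0^\top Px_0]=\E[\Tr(Px_0x_0^\top)]=\Tr(PS_0)$, where measurability of $J_\beta$ follows from continuity of the closed-loop map (Theorem~\ref{thm:wellposed}). Substituting $S_0=\Sigma_0+\mu_0\mu_0^\top$ gives the equivalent form $\Tr(P\Sigma_0)+\mu_0^\top P\mu_0$.

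No step looks like a real obstacle. The one point to handle carefully is that the LMI \eqref{eq:performance_lmi} is used only along the graph of $\phi$, where the IQC term is nonnegative, and not for arbitrary $s$. I also need to justify passing to the infinite sum without assuming stability, and the monotone-convergence argument on the nonnegative partial sums does this.
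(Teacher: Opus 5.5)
Your proposal is correct and is essentially the paper's own proof. You evaluate the LMI along the closed-loop graph, use the sector IQC $s^\top M_\Lambda s\geq 0$ to get the per-step inequality $\beta V_P(x_{k+1})-V_P(x_k)+x_k^\top Qx_k+u_k^\top Ru_k\leq 0$, multiply by $\beta^k$, telescope, drop the nonnegative tail $\beta^N V_P(x_N)$, and take expectations. Your extra remarks (monotone convergence of the nonnegative partial sums, validity for $\beta=1$ without assuming stability, measurability, and $S_0=\Sigma_0+\mu_0\mu_0^\top$) only make explicit steps the paper leaves implicit.
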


\begin{proof}
For every closed-loop pair \((x,\eta)\), \eqref{eq:performance_lmi} and \eqref{eq:iqc_nonnegative} imply
\begin{equation}
        \beta V_P(x^+)-V_P(x)+x^\top Qx+u^\top Ru\leq0.
\end{equation}
Multiplying the inequality at time \(k\) by \(\beta^k\) and summing from \(0\) to \(N\) gives
\begin{align}
&\sum_{k=0}^{N}\beta^k(x_k^\top Qx_k+u_k^\top Ru_k)\notag\\
&\qquad\leq V_P(x_0)-\beta^{N+1}V_P(x_{N+1})\leq V_P(x_0).
\end{align}
Letting \(N\to\infty\) proves \eqref{eq:cost_bound}. Taking expectations gives the final statement.
\end{proof}

\begin{remark}[Use in controller comparison]
Theorem~\ref{thm:performance} gives an upper bound on the neural controller cost. To prove superiority over a benchmark class, one needs a lower bound on the benchmark cost. Sections~\ref{sec:scalar}--\ref{sec:general_certificate} provide such lower bounds for dynamic linear and linear static output-feedback controllers under hard input constraints.
\end{remark}

\subsection{Biases and Incremental Certificates}
\label{sec:incremental}
Theorems~\ref{thm:dt_stability}--\ref{thm:performance} were stated for a bias-free origin equilibrium. Practical neural controllers often use biases, and the saturated ReLU realization in Appendix~\ref{app:saturation_relu} is one such example. The following incremental statement is the corresponding certificate about an arbitrary equilibrium.

Let \((x_\star,\eta_\star,z_\star,u_\star)\) satisfy
\begin{subequations}
\label{eq:equilibrium_incremental}
\begin{align}
        x_\star&=A_px_\star+B_pu_\star,\qquad y_\star=C_px_\star,\\
        z_\star&=W\eta_\star+Uy_\star+b,\qquad
        \eta_\star=\phi(z_\star),\\
        u_\star&=Ky_\star+V\eta_\star+d .
\end{align}
\end{subequations}
For any trajectory define deviations \(\tilde x=x-x_\star\), \(\tilde\eta=\eta-\eta_\star\), \(\tilde z=z-z_\star\), \(\tilde u=u-u_\star\), and \(\tilde s=[\tilde x^\top\;\tilde\eta^\top]^\top\). The same matrices \(F,G,Z,X,H\) map deviations: \(\tilde x^+=F\tilde s\), \(\tilde u=G\tilde s\), and \(\tilde z=Z\tilde s\).

\begin{theorem}[Incremental stability and performance]
\label{thm:incremental_certificate}
Suppose Assumptions~\ref{ass:cone} and \ref{ass:incremental_sector} hold and \(\lambda_{\pf}(|W|)<1\). If \eqref{eq:dt_stability_lmi} is feasible, then \(x_\star\) is globally exponentially stable for the biased controller, with \(x-x_\star\) in place of \(x\). If \eqref{eq:performance_lmi} is feasible, then
\begin{equation}
\label{eq:incremental_cost_bound}
        \sum_{k=0}^{\infty}\beta^k
        \left(\tilde x_k^\top Q\tilde x_k+\tilde u_k^\top R\tilde u_k\right)
        \leq \tilde x_0^\top P\tilde x_0 .
\end{equation}
\end{theorem}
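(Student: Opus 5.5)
The plan is to reduce the statement to the proofs of Theorems~\ref{thm:dt_stability} and \ref{thm:performance}, working in deviation coordinates. Only one new ingredient is needed: the quadratic form \(\tilde s^\top M_\Lambda\tilde s\) must be nonnegative along deviations.

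\textbf{Well-posedness and the equilibrium.} By Theorem~\ref{thm:wellposed}, which allows an arbitrary bias \(b\), the biased controller defines a unique hidden map \(\eta(y)\). Hence every closed-loop trajectory is uniquely defined. The equilibrium \((x_\star,\eta_\star,z_\star,u_\star)\) is given as satisfying \eqref{eq:equilibrium_incremental}, so its existence is a hypothesis rather than something to prove.

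\textbf{Deviation dynamics.} I will subtract the equilibrium relations from the trajectory relations at each time \(k\):
\[
\tilde x^+=(A_p+B_pKC_p)\tilde x+B_pV\tilde\eta=F\tilde s,\qquad \tilde u=G\tilde s,\qquad \tilde z=Z\tilde s .
\]
Here the biases \(b,d\) and the constant \(u_\star\) cancel. The hidden variables satisfy \(\tilde\eta=\phi(z)-\phi(z_\star)\).

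\textbf{Key step: the incremental IQC.} I will show that \(\tilde s^\top M_\Lambda\tilde s\ge0\). Expanding \eqref{eq:M_lambda} as in \eqref{eq:iqc_nonnegative} gives
\[
\tilde s^\top M_\Lambda \tilde s=2(\tilde\eta-D_0\tilde z)^\top\Lambda(D_1\tilde z-\tilde\eta)=2\sum_i\lambda_i(\tilde\eta_i-\underline m_i\tilde z_i)(\overline m_i\tilde z_i-\tilde\eta_i).
\]
Each summand is nonnegative by Assumption~\ref{ass:incremental_sector}, applied with \(r=z_k\) and \(s=z_\star\), together with \(\lambda_i\ge0\). This is the only real change from the origin case. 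The expansion identity itself is pure algebra, valid for any \(\tilde s\); the sector hypothesis is used only to sign it, through the graph relation linking \(\tilde\eta\) and \(\tilde z\).

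\textbf{Stability.} Assuming \eqref{eq:dt_stability_lmi} holds, I take \(V_P(\tilde x)=\tilde x^\top P\tilde x\). Then
\[
V_P(\tilde x^+)-\rho^2V_P(\tilde x)=\tilde s^\top(F^\top PF-\rho^2X^\top PX)\tilde s\le-\tilde s^\top M_\Lambda\tilde s\le0 .
\]
Iterating gives \(\tilde x_k^\top P\tilde x_k\le\rho^{2k}\tilde x_0^\top P\tilde x_0\), which is global exponential stability of \(x_\star\).

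\textbf{Performance.} Assuming \eqref{eq:performance_lmi} holds, the same argument gives
\[
\beta V_P(\tilde x^+)-V_P(\tilde x)+\tilde x^\top Q\tilde x+\tilde u^\top R\tilde u\le0 .
\]
I multiply this by \(\beta^k\), telescope from \(0\) to \(N\), discard the term \(-\beta^{N+1}V_P(\tilde x_{N+1})\le0\), and let \(N\to\infty\). This yields \eqref{eq:incremental_cost_bound}.

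I expect no real obstacle, since the argument is a verbatim transfer of the earlier proofs. The only point needing care is that the origin-case identity \eqref{eq:iqc_nonnegative} used \(\phi(0)=0\) and Assumption~\ref{ass:sector}. In deviation coordinates these are replaced by the slope (incremental sector) condition, which is what makes biased ReLU and saturation realizations admissible.
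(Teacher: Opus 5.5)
Your proposal is correct and matches the paper's proof. The paper also obtains $\tilde s^\top M_\Lambda\tilde s\geq0$ from the incremental sector condition (taking $r=z_k$, $s=z_\star$) and then applies the proofs of Theorems~\ref{thm:dt_stability} and~\ref{thm:performance} verbatim to the deviation dynamics; you simply write out the cancellation of $b$, $d$, $u_\star$ and the telescoping step that the paper leaves implicit.
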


\begin{proof}
The incremental sector condition gives
\[
        2(\tilde\eta-D_0\tilde z)^\top\Lambda(D_1\tilde z-\tilde\eta)\geq0,
\]
which is exactly \(\tilde s^\top M_\Lambda\tilde s\geq0\). The proofs of Theorems~\ref{thm:dt_stability} and \ref{thm:performance} then apply verbatim to the deviation dynamics.
\end{proof}

\begin{remark}[Scope of the global certificates]
The LMIs above are global because the sector bounds are global. This is appropriate for slope-restricted activations and unconstrained inputs. Hard actuator constraints for unstable plants are inherently regional; in that setting the admissible initial set and the input constraint must be verified separately, as in the scalar theorem of Section~\ref{sec:scalar} or by a regional certificate.
\end{remark}

\section{Certification-Compatible Training Framework}
\label{sec:training}
The analysis results of Section~\ref{sec:analysis} are exact acceptance tests for a fixed controller, but they do not by themselves give a convex design problem. Indeed, the controller matrices enter the closed-loop matrices \(F,G,Z\), while the certificate variables \(P\) and \(\Lambda\) enter the LMIs; optimizing all of them jointly is a nonconvex problem. 
This section therefore describes a heuristic candidate-generation framework: training preserves well-posedness, produces controllers for subsequent certification, and leaves the final guarantee to a post-training test that is independent of the optimizer used to obtain the candidate.

\begin{problem}[Certification-compatible INC training]
\label{prob:certified_synthesis}
Given the plant \((A_p,B_p,C_p)\), an admissible initial set or distribution, and an actuator set \(\mcU\), find INC parameters \(\theta=(W,U,V,K,b,d)\) such that:
\begin{enumerate}[leftmargin=*]
\item the controller fixed point is well posed for every measurement in the design domain;
\item the controller respects the input constraints, either by construction or by an independent verification over the prescribed operating region;
\item the closed loop admits a stability certificate \eqref{eq:dt_stability_lmi}, an incremental certificate from Theorem~\ref{thm:incremental_certificate}, or a regional certificate;
\item the certified bound, or a clearly labeled simulation estimate when no certificate is claimed, is small.
\end{enumerate}
\end{problem}


\subsection{Simulation objective and well-posedness constraints}
Let \(\mu\) be a probability distribution on admissible initial conditions. A finite-horizon surrogate for the design objective is
\begin{equation}
\label{eq:training_problem}
\begin{aligned}
        \min_\theta\;&J_T(\theta):=\E_{x_0\sim\mu}\left[\sum_{k=0}^{T-1}\ell(x_k,\,\pi_\theta(C_px_k))+V_f(x_T)\right]\\
        \text{s.t. }&\eta_k=\phi(W\eta_k+Uy_k+b),\quad u_k=Ky_k+V\eta_k+d,\\
        &\norm{W}_\infty\leq\kappa<1.
\end{aligned}
\end{equation}
The same template can be used with a discounted loss when the target certificate is Theorem~\ref{thm:performance}. The constraint \(\norm{W}_\infty\leq\kappa<1\) is not the only possible well-posedness sufficient condition, but it is convex, simple to impose, and directly connected to Theorem~\ref{thm:wellposed}.

\begin{proposition}[Projected training preserves well-posedness]
\label{prop:projected_training_wp}
Assume the activation satisfies Assumption~\ref{ass:cone}. If every training update is followed by the row-wise projection
\begin{subequations}
\begin{align}
        W&\leftarrow \Pi_\kappa(W),\\
        \Pi_\kappa(W)&:=\argmin_{\norm{\bar W}_\infty\leq\kappa}
        \norm{\bar W-W}_F,
        \qquad \kappa<1,
\end{align}
\end{subequations}
then every controller produced during training is well posed. In particular, for every measurement \(y\), the fixed point \(\eta=\phi(W\eta+Uy+b)\) is unique.
\end{proposition}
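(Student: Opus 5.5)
The plan is to reduce the statement to Theorem~\ref{thm:wellposed} via the chain
\[
\lambda_{\pf}(|\bar W|)\le \norm{\bar W}_\infty\le\kappa<1 ,
\]
applied to every matrix the training loop actually uses.

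First, I will check that \(\Pi_\kappa\) is well defined. The set \(\mcC_\kappa=\{\bar W:\norm{\bar W}_\infty\le\kappa\}\) is nonempty (it contains \(0\)), closed and convex, since \(\norm{\cdot}_\infty\) is a norm and \(\mcC_\kappa\) is one of its sublevel sets. Projection in the Frobenius norm, which is a Euclidean norm on \(\R^{q\times q}\), onto a nonempty closed convex set exists and is unique. Hence \(\Pi_\kappa(W)\in\mcC_\kappa\) for every \(W\).

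I will also note that the constraint decouples by rows: \(\norm{\bar W}_\infty\le\kappa\) means each row lies in the \(\ell_1\)-ball of radius \(\kappa\), and \(\norm{\cdot}_F^2\) splits into a sum over rows. This justifies the name row-wise projection, but only membership in \(\mcC_\kappa\) is needed for the proof.

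Second, I will establish the spectral bound for any \(\bar W\in\mcC_\kappa\). The matrix \(|\bar W|\ge0\) has the same row sums as \(\bar W\) in absolute value, so \(\norm{|\bar W|}_\infty=\norm{\bar W}_\infty\). Since the spectral radius is at most any induced norm, and for a nonnegative matrix \(\lambda_{\pf}\) equals the spectral radius,
\[
\lambda_{\pf}(|\bar W|)\le\norm{|\bar W|}_\infty\le\kappa<1 .
\]
This is the remark following Theorem~\ref{thm:wellposed}.

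Third, I will apply Theorem~\ref{thm:wellposed} under Assumption~\ref{ass:cone}. It gives that \(\eta=\phi(\bar W\eta+Uy+b)\) has a unique solution for every \(y\) and \(b\), whatever the current \(U,b,V,K,d\) are. Those other parameters are unconstrained and do not affect the argument.

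Finally, I will argue by induction over the iterations. Every controller produced during training is evaluated with a projected \(W\), so each one is well posed. If the initialization is unprojected, I will either assume it also satisfies the constraint or define "produced during training" as the post-update iterates.

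I expect no real obstacle. The only step needing care is confirming that the projection lands exactly in \(\mcC_\kappa\), which follows from closedness of the set, and matching the definition of \(\lambda_{\pf}\) to the spectral radius.
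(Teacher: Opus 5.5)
Your proof is correct and follows the paper's argument. The projection places \(W\) in \(\{\bar W:\norm{\bar W}_\infty\le\kappa\}\), so \(\lambda_{\pf}(|W|)\le\norm{W}_\infty\le\kappa<1\), and Theorem~\ref{thm:wellposed} then gives a unique fixed point at every iterate; your extra checks (existence and uniqueness of the Frobenius projection onto a closed convex set, \(\norm{|W|}_\infty=\norm{W}_\infty\), and the treatment of the initial iterate) simply fill in details the paper leaves implicit.
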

\begin{proof}
The projection enforces \(\norm{W}_\infty\leq\kappa<1\). Since \(\lambda_{\pf}(|W|)\leq\norm{W}_\infty\), Theorem~\ref{thm:wellposed} applies at every iterate. The projection is row separable because \(\norm{W}_\infty\) is the maximum row \(\ell_1\)-norm.
\end{proof}

\subsection{Differentiating through the implicit controller}
Training \eqref{eq:training_problem} requires gradients through the algebraic inference map. The next proposition gives the formulas used in the candidate-generation loop.

\begin{proposition}[Implicit sensitivities and adjoints]
\label{prop:implicit_gradients}
Fix a measurement \(y\) and let \(\eta\) solve \(\eta=\phi(W\eta+Uy+b)\). Suppose \(\lambda_{\pf}(|W|)<1\) and let \(D_\phi=\diag(\delta_i)\), where \(\delta_i=\phi_i'(W\eta+Uy+b)_i\) at differentiability points; at ReLU kink points any \(\delta_i\in[0,1]\) gives a valid generalized derivative. Then \(I-D_\phi W\) is nonsingular and
\begin{equation}
\label{eq:inc_jacobian}
        \frac{\partial \eta}{\partial y}=(I-D_\phi W)^{-1}D_\phi U,
\end{equation}
so the local controller Jacobian is
\begin{equation}
\label{eq:inc_output_jacobian}
        \frac{\partial u}{\partial y}=K+V(I-D_\phi W)^{-1}D_\phi U.
\end{equation}
Moreover, for a scalar loss \(\ell(u)\), let \(q=\nabla_u\ell\). Define \(v\) as the unique solution of the adjoint fixed-point equation
\begin{equation}
\label{eq:adjoint_fixed_point}
        v=D_\phi(W^\top v+V^\top q).
\end{equation}
Then
\begin{subequations}
\label{eq:parameter_gradients}
\begin{align}
        \nabla_y\ell &= K^\top q+U^\top v,\qquad
        \nabla_W\ell = v\eta^\top,
        \qquad \nabla_U\ell = vy^\top,\\
        \nabla_b\ell &= v,
        \qquad \nabla_V\ell = q\eta^\top,
        \qquad \nabla_K\ell = qy^\top,
        \qquad \nabla_d\ell = q.
\end{align}
\end{subequations}
\end{proposition}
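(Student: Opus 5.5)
The argument has three parts: show that $I-D_\phi W$ is invertible, differentiate the fixed-point identity implicitly, and obtain the parameter gradients from an adjoint (reverse-mode) computation.

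\emph{Nonsingularity.} Every $\delta_i$ lies in $[0,1]$. At differentiability points this follows from Assumption~\ref{ass:cone}, since componentwise nonexpansiveness gives $|\phi_i'|\le 1$; at kinks it holds by the choice of generalized derivative. Hence $|D_\phi W|\le |W|$ componentwise, and by monotonicity of the Perron--Frobenius eigenvalue on nonnegative matrices,
\[
\rho(D_\phi W)\le \lambda_{\pf}(|D_\phi W|)\le \lambda_{\pf}(|W|)<1 .
\]
So $I-D_\phi W$ is nonsingular, and its inverse is the Neumann series $\sum_j (D_\phi W)^j$. The same bound applies to $W^\top D_\phi$, because it has the same spectrum as $D_\phi W$ up to transposition and similarity; indeed $\rho(W^\top D_\phi)=\rho(D_\phi W^\top)\le\lambda_{\pf}(|W|^\top)=\lambda_{\pf}(|W|)$. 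This is what makes the adjoint equation \eqref{eq:adjoint_fixed_point} uniquely solvable, and it can even be solved by fixed-point iteration.

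\emph{Jacobians.} By Theorem~\ref{thm:wellposed}, $\eta(y)$ is well defined and Lipschitz. At points where $\phi$ is differentiable at $z=W\eta+Uy+b$, I differentiate $\eta=\phi(z)$ by the chain rule to get
\[
d\eta=D_\phi(W\,d\eta+U\,dy+d b+dW\,\eta+dU\,y).
\]
This is the main technical point: it is valid once one knows that $\eta$ is differentiable there. I would obtain that from the classical implicit function theorem applied to $g(\eta,y)=\eta-\phi(W\eta+Uy+b)$, whose $\eta$-Jacobian $I-D_\phi W$ we have just shown to be nonsingular. At ReLU kinks I would interpret the formulas as elements of a generalized (Clarke or conservative) Jacobian. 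Solving the linear relation gives
\[
d\eta=(I-D_\phi W)^{-1}D_\phi\,(U\,dy+\cdots),
\]
which is \eqref{eq:inc_jacobian}. Then \eqref{eq:inc_output_jacobian} follows from $du=K\,dy+V\,d\eta+dK\,y+dV\,\eta+dd$.

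\emph{Adjoint gradients.} Write $d\ell=q^\top du$ and substitute the expression for $d\eta$; the term to handle is $q^\top V(I-D_\phi W)^{-1}D_\phi(\cdots)$. Define $v^\top = q^\top V(I-D_\phi W)^{-1}D_\phi$. Transposing, $v=D_\phi(I-W^\top D_\phi)^{-1}V^\top q$, and the push-through identity $(I-D_\phi W^\top)^{-1}D_\phi=D_\phi(I-W^\top D_\phi)^{-1}$ puts this in the form $v=(I-D_\phi W^\top)^{-1}D_\phi V^\top q$. Rearranging gives $v=D_\phi W^\top v+D_\phi V^\top q$, which is exactly \eqref{eq:adjoint_fixed_point}. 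Then
\[
d\ell = q^\top K\,dy+v^\top(U\,dy+db+dW\,\eta+dU\,y)+q^\top(dV\,\eta+dK\,y+dd),
\]
and reading off each coefficient with the trace identity $v^\top dW\,\eta=\Tr(\eta v^\top dW)$ gives each entry of \eqref{eq:parameter_gradients}.

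I expect the main obstacle to be the nonsmooth case, where one has to justify that the chain rule holds with an arbitrary $\delta_i\in[0,1]$. I would state the result as a conservative-Jacobian selection and rely on the nonsingularity bound, which is uniform over all $D_\phi$ with diagonal entries in $[0,1]$.
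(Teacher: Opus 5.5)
Your proposal is correct and follows essentially the same route as the paper: Perron--Frobenius monotonicity gives nonsingularity of $I-D_\phi W$ and $I-D_\phi W^\top$, you differentiate the fixed point implicitly, transpose the sensitivity relation to get the adjoint equation, and read the gradients off the differential, while your push-through identity and conservative-Jacobian framing spell out steps the paper leaves implicit. Two small touch-ups: nonexpansiveness only gives $|\delta_i|\le 1$, not $\delta_i\in[0,1]$, but $|D_\phi W|\le|W|$ is all you actually use; and the classical implicit function theorem needs $\phi$ to be $C^1$ near $z$, which holds off the kinks for the activations considered here and can otherwise be bypassed using the Lipschitz bound of Theorem~\ref{thm:wellposed}.
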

\begin{proof}
Differentiating \(\eta=\phi(W\eta+Uy+b)\) gives
\((I-D_\phi W)d\eta=D_\phi Udy\), which proves \eqref{eq:inc_jacobian}. Since \(0\preceq D_\phi\preceq I\) componentwise, the nonnegative-matrix monotonicity of the Perron--Frobenius eigenvalue gives \(\lambda_{\pf}(|D_\phi W|)\leq\lambda_{\pf}(|W|)<1\); hence \(I-D_\phi W\) is nonsingular. Transposing the sensitivity relation yields the reverse-mode equation \eqref{eq:adjoint_fixed_point}; the adjoint fixed point is unique by the same argument applied to \(|D_\phi W^\top|\). The parameter-gradient expressions then follow by writing the differential of \(u=Ky+V\eta+d\) and of the affine argument \(W\eta+Uy+b\).
\end{proof}

\subsection{Certificate-aware retraining and the acceptance gate}
After training, the candidate controller is certified by solving the relevant LMI with the controller matrices fixed. If the stability certificate is the target, define
\begin{equation}
\label{eq:stab_violation}
        \nu_{\rm stab}(\theta;P,\Lambda)=\lambda_{\max}\!\big(F^\top P F-\rho^2X^\top PX+M_\Lambda\big).
\end{equation}
If the performance certificate is the target, define
\begin{multline}
\label{eq:perf_violation}
        \nu_{\rm perf}(\theta;P,\Lambda)=\lambda_{\max}\!\big(\beta F^\top P F-X^\top PX+X^\top QX\\
        +G^\top RG+M_\Lambda\big).
\end{multline}
Feasibility corresponds to making the selected violation nonpositive for some \(P\succ0\) and diagonal \(\Lambda\succeq0\). If certification fails, one can return to training with a certificate-aware penalty such as
\begin{equation}
\label{eq:cert_aware_loss}
        J_T(\theta)+\gamma_{\rm cert}\max\{0,\nu_{\rm perf}(\theta;P,\Lambda)\},
\end{equation}
where \((P,\Lambda)\) may be periodically recomputed by solving the SDP for the current controller. 
This remains an heuristic, but it makes the optimization aware of the certificate that will ultimately be required.

\begin{proposition}[Certification gate]
\label{prop:certification_gate}
Suppose the training loop returns parameters \(\theta\) satisfying \(\lambda_{\pf}(|W|)<1\). If \eqref{eq:dt_stability_lmi} is feasible for the resulting closed loop, then the returned INC is well posed and exponentially stabilizing. If \eqref{eq:performance_lmi} is feasible, then it is well posed and satisfies the discounted-performance bound \eqref{eq:cost_bound}. These guarantees are independent of the optimizer used to obtain \(\theta\).
\end{proposition}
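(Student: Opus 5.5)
The certification gate reduces almost entirely to results already proved in Section~\ref{sec:analysis}. The plan is to treat the trained parameters \(\theta=(W,U,V,K,b,d)\) as a fixed INC and apply those results to it. The only hypotheses they need are the activation assumptions, the spectral condition \(\lambda_{\pf}(|W|)<1\), and feasibility of the relevant LMI. Nothing about how \(\theta\) was produced appears in these hypotheses, and that is exactly the independence claim.

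\emph{Well-posedness.} Since the returned parameters satisfy \(\lambda_{\pf}(|W|)<1\) and the activation satisfies Assumption~\ref{ass:cone}, Theorem~\ref{thm:wellposed} applies directly. It gives, for every \(y\) and every bias \(b\), a unique fixed point \(\eta(y)\). It also shows that \(\pi(y)=Ky+V\eta(y)+d\) is globally Lipschitz. Hence the returned INC is well posed, and the closed-loop trajectory is uniquely defined for every \(x_0\).

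\emph{Stability and performance.} If \eqref{eq:dt_stability_lmi} is feasible with some \(P\succ0\), diagonal \(\Lambda\succeq0\), and \(\rho\in[0,1)\), we invoke Theorem~\ref{thm:dt_stability} in the bias-free case. It yields \(x_k^\top Px_k\leq\rho^{2k}x_0^\top Px_0\), which is exponential stability. When biases are present, we instead invoke Theorem~\ref{thm:incremental_certificate} around the equilibrium \(x_\star\), under Assumption~\ref{ass:incremental_sector}. Likewise, if \eqref{eq:performance_lmi} is feasible, Theorem~\ref{thm:performance} gives \eqref{eq:cost_bound}, or its incremental analogue \eqref{eq:incremental_cost_bound}.

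\emph{Independence from the optimizer.} Each certificate is a statement about the closed loop defined by the fixed matrices \(F,G,Z\), together with a certificate pair \((P,\Lambda)\) that can be recomputed by an SDP after training. The conclusions therefore depend on \(\theta\) only through these matrices. They do not depend on the training trajectory, the projection step, the penalty \eqref{eq:cert_aware_loss}, or any gradient approximation from Proposition~\ref{prop:implicit_gradients}.

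The only point requiring care is making the scope of the proposition explicit. The sector hypotheses (Assumption~\ref{ass:sector}, or Assumption~\ref{ass:incremental_sector} with biases) must be stated as implicit in the gate, because Theorems~\ref{thm:dt_stability}--\ref{thm:incremental_certificate} need them and the proposition only lists the spectral condition. I would also note that \(\lambda_{\pf}(|W|)<1\) is checked on the returned \(W\) itself, not inferred from the training constraint. If \(\norm{W}_\infty\leq\kappa<1\) was enforced by Proposition~\ref{prop:projected_training_wp}, it implies this condition directly.
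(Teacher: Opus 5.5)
Your proposal is correct and follows the paper's own argument. Well-posedness comes from Theorem~\ref{thm:wellposed}, and the stability and performance conclusions come from Theorems~\ref{thm:dt_stability} and~\ref{thm:performance} applied to the trained matrices held fixed, which is exactly why the guarantee does not depend on the optimizer. Your flagging of the unstated hypotheses (Assumption~\ref{ass:sector} and \(b=d=0\), or Theorem~\ref{thm:incremental_certificate} otherwise) is a fair refinement of the paper's one-line proof. In the biased stability case, the equilibrium \(x_\star\) you invoke does exist, since the incremental argument applied to two arbitrary trajectories makes \(x\mapsto A_px+B_p\pi(C_px)\) a \(\rho\)-contraction in the \(P\)-norm.
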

\begin{proof}
Well-posedness follows from Theorem~\ref{thm:wellposed}. The stability and performance conclusions are exactly Theorems~\ref{thm:dt_stability} and \ref{thm:performance}, applied after training with the matrices fixed.
\end{proof}

\noindent\textbf{Algorithm 1 (certification-compatible train--certify--retrain).}
\begin{enumerate}[leftmargin=*]
\item Choose an INC architecture, a well-posedness budget \(\kappa<1\), and, if hard input limits must hold globally, a saturated readout or another constraint-preserving parameterization.
\item Train by simulation using \eqref{eq:training_problem}, differentiating through the fixed point using Proposition~\ref{prop:implicit_gradients}, and projecting \(W\) as in Proposition~\ref{prop:projected_training_wp}.
\item With the trained controller fixed, solve the stability, incremental, or performance LMI.
\item If certification fails, retrain with a penalty such as \eqref{eq:cert_aware_loss}, tighten \(\kappa\), change the architecture, or initialize from a certified linear controller.
\item Accept the controller only when Proposition~\ref{prop:certification_gate} or an independent regional certificate applies.
\end{enumerate}

\subsection{Hard input constraints and regional checks}
Sample-wise clipping or projection of controls during training is useful as a heuristic, but it is not by itself a proof that \(u(x)\in\mcU\) for all states in a continuum. In this paper, hard input constraints are certified in one of two ways. First, the readout can impose the constraint by construction, for example through a componentwise saturation. Second, on a compact operating set \(\mcX\), one may independently verify
\begin{equation}
\label{eq:regional_input_check}
        \pi_\theta(C_px)\in\mcU,
        \qquad A_px+B_p\pi_\theta(C_px)\in\mcX,
        \qquad \forall x\in\mcX .
\end{equation}
For feedforward ReLU or saturated piecewise-affine controllers, \eqref{eq:regional_input_check} can be checked exactly by activation-region enumeration when the dimension is modest, or conservatively by mixed-integer, interval, or polyhedral bounds. For  implicit controllers with \(W\neq0\), the same verification must include the fixed-point equation; Theorem~\ref{thm:wellposed} supplies uniqueness, but not by itself input admissibility.

The training framework is certificate-compatible in the sense that learning supplies candidate controllers, while the implicit representation supplies well-posedness, differentiation, and post-training certificate machinery. When no LMI or regional check is supplied, the result should be described as simulation evidence rather than as a certificate.

\section{Separation and Comparison Results}
\label{sec:separation}
We now turn from analysis and training to comparison. The results in this section show that the INC class, under actuator constraints,  can achieve a provable performance separation from linear controllers. 

\subsection{Long-Run Separation from Dynamic Linear Control}
\label{sec:scalar}
We now prove that a static ReLU controller can outperform every finite-order dynamic linear controller in an infinite-horizon performance index. The proof is by construction: we exhibit a scalar constrained plant, derive a first-move lower bound for all admissible dynamic linear controllers, and compute the exact cost of a saturated two-ReLU law.

Consider the system
\begin{equation}
\label{eq:scalar_plant}
        x_{k+1}=\frac32 x_k+u_k,
        \qquad |u_k|\leq1,
        \qquad x_0\sim {\rm Unif}[-1,1].
\end{equation}
Let \(\mcC_{\rm dyn}\) be the class of finite-dimensional LTI dynamic controllers
\begin{subequations}
\label{eq:dyn_linear}
\begin{align}
        \xi_{k+1}&=A_c\xi_k+B_cx_k,\\
        u_k&=C_c\xi_k+D_cx_k,
\end{align}
\end{subequations}
with fixed initial controller state \(\xi_0\) independent of \(x_0\), satisfying \(|u_k|\leq1\) for all \(k\geq0\) and all \(x_0\in[-1,1]\). At \(k=0\), every such controller has
\begin{equation}
        u_0=d+\kappa x_0,
\end{equation}
with \(|d|+|\kappa|\leq1\).
Define the discounted state cost
\begin{equation}
\label{eq:J_beta_state}
        J_\beta(\pi)=\E\sum_{k=0}^{\infty}\beta^k x_k^2,
        \qquad \beta\in(0,1].
\end{equation}

\begin{theorem}[Infinite-horizon separation from dynamic linear control]
\label{thm:longrun_dyn}
For the system \eqref{eq:scalar_plant}, every controller in \(\mcC_{\rm dyn}\) satisfies
\begin{equation}
\label{eq:dyn_lower_state}
        J_\beta\geq \frac13+\frac{\beta}{12}.
\end{equation}
The static two-ReLU controller
\begin{equation}
\label{eq:relu_sat_controller}
        u=1-\sigma\!\left(\frac32x+1\right)+\sigma\!\left(\frac32x-1\right)
        =-\sat\!\left(\frac32x\right)
\end{equation}
satisfies \(|u|\leq1\) and has
\begin{equation}
\label{eq:nn_cost_state}
        J_\beta^{\rm NN}=\frac13+\frac{\beta}{36}.
\end{equation}
Therefore
\begin{equation}
        J_\beta^{\rm NN}<\inf_{\pi\in\mcC_{\rm dyn}}J_\beta(\pi)
        \qquad \forall\beta\in(0,1].
\end{equation}
The certified gap is at least \(\beta/18\).
\end{theorem}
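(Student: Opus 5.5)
The plan has two halves: a lower bound valid for every controller in \(\mcC_{\rm dyn}\) and an exact cost computation for \eqref{eq:relu_sat_controller}. The lower bound uses only the first two terms of the discounted sum, since every later term is nonnegative:
\[
J_\beta(\pi)\geq \E[x_0^2]+\beta\,\E[x_1^2].
\]
Here \(\E[x_0^2]=\tfrac13\) for \(x_0\sim{\rm Unif}[-1,1]\). The whole lower bound then reduces to minimizing \(\E[x_1^2]\) over first moves \(u_0=d+\kappa x_0\) with \(|d|+|\kappa|\leq1\).

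First I justify that constraint on the first move. Since \(\xi_0\) is fixed and independent of \(x_0\), we have \(u_0=C_c\xi_0+D_cx_0\), which is affine in \(x_0\). Requiring \(|u_0|\leq1\) at \(x_0=\pm1\) gives \(|d\pm\kappa|\leq1\), which is equivalent to \(|d|+|\kappa|\leq1\).

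Next I compute the one-step cost. We have \(x_1=d+(\tfrac32+\kappa)x_0\), and because \(\E x_0=0\),
\[
\E x_1^2=d^2+\left(\tfrac32+\kappa\right)^2\!\cdot\tfrac13 .
\]
To minimize this over the constraint set, note that choosing \(d\neq0\) only tightens the admissible range of \(\kappa\) and adds \(d^2\geq0\). So the minimum is at \(d=0\), \(\kappa=-1\), where \(\E x_1^2=\tfrac14\cdot\tfrac13=\tfrac1{12}\). Formally, \(|\kappa|\leq1-|d|\) gives \(\tfrac32+\kappa\geq\tfrac12+|d|\), so \(\E x_1^2\geq d^2+\tfrac13(\tfrac12+|d|)^2\geq\tfrac1{12}\). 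This yields \eqref{eq:dyn_lower_state}.

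Then I analyze the neural controller. A case check on \(x\) confirms the identity \(1-\sigma(\tfrac32x+1)+\sigma(\tfrac32x-1)=-\sat(\tfrac32x)\), and \(|u|\leq1\) is immediate from the saturation. For \(x\in[-\tfrac23,\tfrac23]\) the controller gives \(u=-\tfrac32x\), so the next state is \(0\). For \(\tfrac23<|x|\leq1\) it gives \(x_1=\tfrac32x-\operatorname{sign}(x)\), with \(|x_1|\leq\tfrac12\leq\tfrac23\), so \(x_2=0\). Since \(0\) is an equilibrium with \(u=0\), all later terms vanish. Therefore
\[
J^{\rm NN}_\beta=\tfrac13+\beta\,\E x_1^2,
\qquad
\E x_1^2=\tfrac12\cdot 2\int_{2/3}^{1}\left(\tfrac32x-1\right)^2dx .
\]
Substituting \(t=\tfrac32x-1\in[0,\tfrac12]\), \(dx=\tfrac23\,dt\), the integral becomes \(\int_0^{1/2}t^2\,\tfrac23\,dt=\tfrac23\cdot\tfrac1{24}=\tfrac1{36}\). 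Hence \(J^{\rm NN}_\beta=\tfrac13+\tfrac{\beta}{36}\).

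The gap is then \(\beta(\tfrac1{12}-\tfrac1{36})=\tfrac{\beta}{18}>0\), which gives the strict separation. The main obstacle is the lower-bound step: the argument must rest only on the admissibility of \(u_0\), which relies on \(\xi_0\) being independent of \(x_0\), and on nonnegativity of the tail terms. It should not attempt to analyze the closed-loop behavior of the dynamic controller beyond the first move.
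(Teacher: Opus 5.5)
Your proposal is correct and follows essentially the same route as the paper. It gives a first-move lower bound using only \(\E[x_0^2]+\beta\E[x_1^2]\) under the affine constraint \(|d|+|\kappa|\leq1\), then computes the saturated ReLU cost exactly over two steps, obtaining \(\E[x_1^2]=1/36\) and \(x_2=0\); your derivation of \(|d|+|\kappa|\leq1\) from the endpoints \(x_0=\pm1\) and the bound \(\tfrac32+\kappa\geq\tfrac12+|d|\) supply details the paper leaves implicit.
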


\begin{proof}
For any dynamic linear controller, the first input satisfies \(|d|+|\kappa|\leq1\). Since \(x_0\) is symmetric with variance \(1/3\),
\begin{equation}
        \E[x_1^2]
        =\E\left[\left(\left(\frac32+\kappa\right)x_0+d\right)^2\right]
        =\frac{(\frac32+\kappa)^2}{3}+d^2.
\end{equation}
The minimum subject to \(|d|+|\kappa|\leq1\) is attained at \(d=0\), \(\kappa=-1\), and equals \(1/12\). Since all future terms in \eqref{eq:J_beta_state} are nonnegative,
\[
        J_\beta\geq \E[x_0^2]+\beta\E[x_1^2]
        \geq \frac13+\frac{\beta}{12}.
\]

For \eqref{eq:relu_sat_controller}, \(x_1=0\) when \(|x_0|\leq2/3\), and \(|x_1|=1.5|x_0|-1\) when \(2/3<|x_0|\leq1\). Hence
\begin{equation}
        \E[x_1^2]=\int_{2/3}^{1}(1.5x-1)^2\,dx=\frac{1}{36}.
\end{equation}
Moreover, \(|x_1|\leq1/2<2/3\), so the next control action is \(u_1=-1.5x_1\) and \(x_2=0\). Thus all state costs after \(k=1\) vanish, proving \eqref{eq:nn_cost_state}. The strict gap is \((1/3+\beta/12)-(1/3+\beta/36)=\beta/18\).
\end{proof}

\begin{remark}[Why controller memory does not help]
The memory \(\xi_k\) of a dynamic linear controller can affect inputs for \(k\geq1\), but its first move is affine in \(x_0\). The hard input constraint forces this affine map to have slope at most one in magnitude over \([-1,1]\). The ReLU controller uses slope \(-1.5\) near the origin and saturates near the boundary. The unavoidable first-step cost lower bound for dynamic linear control cannot be recovered later because the infinite-horizon cost is nonnegative term by term.
\end{remark}

\subsection{Quadratic input penalty}
The same idea yields a conventional state-input index inequality. Define
\begin{equation}
\label{eq:J_beta_r}
        J_{\beta,r}(\pi)=\E\sum_{k=0}^{\infty}\beta^k(x_k^2+r u_k^2),
        \qquad r\geq0.
\end{equation}

\begin{theorem}[State-input performance separation]
\label{thm:input_penalty_sep}
For the plant \eqref{eq:scalar_plant} and \(\beta\in(0,1]\), the controller \eqref{eq:relu_sat_controller} satisfies
\begin{equation}
\label{eq:nn_input_penalty_cost}
        J_{\beta,r}^{\rm NN}
        =\frac13+\frac{\beta}{36}
        +r\left(\frac59+\frac{\beta}{16}\right).
\end{equation}
Every admissible finite-dimensional dynamic linear controller satisfies
\begin{equation}
\label{eq:dyn_input_penalty_lower}
        J_{\beta,r}\geq \frac13+\frac{r}{3}+\frac{\beta}{12}
\end{equation}
whenever \(0\leq r\leq \beta/2\). Consequently, the static ReLU controller strictly beats every admissible dynamic linear controller whenever
\begin{equation}
\label{eq:r_threshold}
        0\leq r<\frac{8\beta}{32+9\beta}.
\end{equation}
For \(\beta=1\), this condition is \(r<8/41\).
\end{theorem}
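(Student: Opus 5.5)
The plan reuses the structure of the proof of Theorem~\ref{thm:longrun_dyn}. First I compute the ReLU cost exactly. Then I obtain a lower bound for every linear controller by truncating the sum after the first step. Finally I compare the two expressions.

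\textbf{Exact cost of \eqref{eq:relu_sat_controller}.} The state part \(1/3+\beta/36\) is already given by Theorem~\ref{thm:longrun_dyn}, so only the input terms need computing. At \(k=0\) we have \(u_0=-\sat(1.5x_0)\). Hence \(u_0^2=\tfrac94x_0^2\) on \(|x_0|\le 2/3\) and \(u_0^2=1\) otherwise, which gives
\[
\E[u_0^2]=\int_0^{2/3}\tfrac94x^2\,dx+\int_{2/3}^1 dx=\tfrac29+\tfrac13=\tfrac59 .
\]
At \(k=1\), the proof of Theorem~\ref{thm:longrun_dyn} shows that \(|x_1|\le1/2\). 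So \(u_1=-1.5x_1\) and \(\E[u_1^2]=\tfrac94\cdot\tfrac1{36}=\tfrac1{16}\). For \(k\ge2\), \(x_k=0\) and therefore \(u_k=-\sat(0)=0\). Summing yields \eqref{eq:nn_input_penalty_cost}.

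\textbf{Lower bound for \(\mcC_{\rm dyn}\).} All terms in \eqref{eq:J_beta_r} are nonnegative, so I keep only \(x_0^2\), \(ru_0^2\) and \(\beta x_1^2\). As in Theorem~\ref{thm:longrun_dyn}, the first move is \(u_0=d+\kappa x_0\) with \(|d|+|\kappa|\le1\). Using the symmetry and variance \(1/3\) of \(x_0\),
\[
J_{\beta,r}\ge \tfrac13+r\Big(d^2+\tfrac{\kappa^2}{3}\Big)+\beta\Big(\tfrac{(\tfrac32+\kappa)^2}{3}+d^2\Big).
\]
Setting \(d=0\) both lowers the objective and enlarges the feasible range of \(\kappa\), so it suffices to minimize the convex function \(g(\kappa)=\tfrac13\big(r\kappa^2+\beta(\tfrac32+\kappa)^2\big)\) over \(|\kappa|\le1\).

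This is the only step needing care, and it is where the hypothesis \(r\le\beta/2\) enters. The unconstrained minimizer of \(g\) is \(\kappa^\ast=-\tfrac{3\beta}{2(r+\beta)}\). We have \(\kappa^\ast\le-1\) exactly when \(r\le\beta/2\), so in that regime convexity puts the constrained minimum at the boundary \(\kappa=-1\). There \(g(-1)=\tfrac13(r+\beta/4)=r/3+\beta/12\), which gives \eqref{eq:dyn_input_penalty_lower}.

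\textbf{Comparison.} The strict inequality \(J^{\rm NN}_{\beta,r}<\tfrac13+\tfrac r3+\tfrac\beta{12}\) is equivalent to \(r(\tfrac29+\tfrac\beta{16})<\tfrac\beta{18}\). Multiplying by \(144\) turns this into \(r(32+9\beta)<8\beta\), which is \eqref{eq:r_threshold}.

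It remains to check that the threshold lies inside the range where the lower bound is valid, namely \(8\beta/(32+9\beta)\le\beta/2\). This is equivalent to \(16\le 32+9\beta\), which always holds. Because the lower bound holds for each controller individually, it also bounds the infimum over \(\mcC_{\rm dyn}\), and strict dominance follows. Setting \(\beta=1\) gives \(r<8/41\).
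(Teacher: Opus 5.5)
Your proposal is correct and follows the paper's own route. The paper likewise gets the exact ReLU cost from \(x_2=0\), \(\E[u_0^2]=5/9\) and \(\E[u_1^2]=1/16\), and gets the linear lower bound by keeping only the \(k=0\) costs and the \(k=1\) state cost, minimized at \(d=0\), \(\kappa=-1\) when \(r\le\beta/2\); you simply spell out the unconstrained minimizer \(\kappa^\ast=-3\beta/(2(r+\beta))\) and the check \(16\le 32+9\beta\), which the paper only asserts.
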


\begin{proof}
For a dynamic linear controller, using only the costs at \(k=0\) and the state cost at \(k=1\),
\begin{align}
J_{\beta,r}
&\geq \frac13 + \E[r(d+\kappa x_0)^2]
 +\beta\E\left[\left(\left(\frac32+\kappa\right)x_0+d\right)^2\right] \\
&=\frac13 +(r+\beta)d^2+\frac{1}{3}\left(r\kappa^2+\beta\left(\frac32+\kappa\right)^2\right).
\end{align}
Minimization over \(|d|+|\kappa|\leq1\) gives \(d=0\). If \(r\leq\beta/2\), the constrained minimizer in \(\kappa\) is \(\kappa=-1\), yielding \eqref{eq:dyn_input_penalty_lower}.

For the ReLU controller, \(x_2=0\). We already have \(\E[x_1^2]=1/36\). Also,
\begin{equation}
        \E[u_0^2]=\int_0^{2/3}(1.5x)^2\,dx+\int_{2/3}^{1}1\,dx=\frac59,
\end{equation}
and, since \(u_1=-1.5x_1\),
\begin{equation}
        \E[u_1^2]=\frac94\E[x_1^2]=\frac{1}{16}.
\end{equation}
This proves \eqref{eq:nn_input_penalty_cost}. The inequality \eqref{eq:nn_input_penalty_cost}\(<\)\eqref{eq:dyn_input_penalty_lower} is equivalent to \eqref{eq:r_threshold}; this threshold is always below \(\beta/2\), so the lower bound applies.
\end{proof}

\subsection{Comparison with Static Output Feedback}
\label{sec:sof}
Static output feedback (SOF) can describe different controller classes. We distinguish three cases.

\begin{proposition}[No dominance over arbitrary nonlinear SOF]
\label{prop:no_arbitrary_sof}
Let \(\mcC_{\rm nlSOF}\) be the class of all admissible static nonlinear output-feedback laws \(u=\mu(y)\), and let \(\mcC_{\rm INC}\) be the subclass representable by \eqref{eq:controller}. Then
\begin{equation}
        \mcC_{\rm INC}\subseteq\mcC_{\rm nlSOF}.
\end{equation}
Consequently, no controller in \(\mcC_{\rm INC}\) can be proved strictly superior to the best controller in \(\mcC_{\rm nlSOF}\) without imposing additional restrictions on the nonlinear SOF class.
\end{proposition}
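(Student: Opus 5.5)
The argument has two parts: a set inclusion that follows straight from the definitions, and a short optimality argument that turns the inclusion into the "no dominance" conclusion.

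\textbf{Inclusion.} Take any controller in \(\mcC_{\rm INC}\), with parameters \((W,U,V,K,b,d)\) satisfying the well-posedness requirement of the definition (for instance \(\lambda_{\pf}(|W|)<1\)) and admissible for the input constraint. Theorem~\ref{thm:wellposed} supplies, for each \(y\), a unique solution \(\eta(y)\) of \eqref{eq:controller_fixed}. We then set
\[
\mu(y):=Ky+V\eta(y)+d .
\]
This is a single-valued map of the current output \(y\) alone. It is even globally Lipschitz, by \eqref{eq:controller_lipschitz}. Because \(\eta_k\) is recomputed from \(y_k\) at every step and carries no memory, the closed-loop trajectory generated by the INC coincides, for every initial condition, with the trajectory generated by \(u_k=\mu(y_k)\). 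Admissibility is a property of the induced law, so it transfers unchanged. Hence \(\mu\in\mcC_{\rm nlSOF}\).

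\textbf{No dominance.} Fix any cost functional \(J\) of the kind used in the paper, such as \eqref{eq:J_beta_state} or \eqref{eq:J_beta_r}. From the inclusion,
\[
\inf_{\mcC_{\rm nlSOF}}J\le J(\pi)\quad\text{for every }\pi\in\mcC_{\rm INC}.
\]
So no INC can achieve a cost strictly below that infimum. Whenever the infimum is attained (or approached) by a static law, strict superiority of an INC is impossible. Without restricting the class, the best nonlinear SOF can always be taken to be the INC itself, which rules out a strict inequality.

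To make "no strict dominance" concrete, we will state it as the contrapositive: any claim of the form \(J(\pi)<J(\mu)\) for all \(\mu\in\mcC_{\rm nlSOF}\) fails already at \(\mu=\pi\). This is immediate and shows the claim needs no assumption on the plant.

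\textbf{Main obstacle.} The only delicate point is definitional. Membership in \(\mcC_{\rm INC}\) must entail well-posedness, so that \(\mu\) is genuinely a function and not a relation. The notion of admissibility must also be the same for both classes, e.g.\ \(|u|\le1\) on the operating set. I will state this explicitly, noting that it is guaranteed either by the Perron--Frobenius condition or by the definition of a well-posed implicit controller.
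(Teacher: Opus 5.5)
Your proposal is correct and follows the paper's argument: well-posedness makes the INC output a single-valued function of the current measurement, which gives \(\mcC_{\rm INC}\subseteq\mcC_{\rm nlSOF}\), and the inclusion gives \(\inf_{\mcC_{\rm nlSOF}}J\le J(\pi)\) for every INC \(\pi\), which rules out strict dominance. Your explicit transfer of admissibility and your observation that a claimed strict inequality already fails at \(\mu=\pi\) make the same one-line argument slightly more careful.
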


\begin{proof}
For every well-posed set of matrices in \eqref{eq:controller}, the input \(u\) is a single-valued function of the current measurement \(y\). Hence it is a static nonlinear output-feedback law. The inclusion implies \(\inf_{\mu\in\mcC_{\rm nlSOF}}J(\mu)\leq\inf_{\pi\in\mcC_{\rm INC}}J(\pi)\) for any common cost and constraints.
\end{proof}

Thus meaningful strict comparison should be made against a proper subclass. The simplest is linear static output feedback.

\begin{theorem}[Separation from linear static output feedback]
\label{thm:lsof_sep}
Consider the scalar plant \eqref{eq:scalar_plant} with measured output \(y=x\). Let the benchmark class be all linear static output-feedback laws \(u=Fy\) satisfying \(|Fx|\leq1\) for all \(|x|\leq1\). Imposing the same trajectory-wise input admissibility can only restrict this benchmark class, and the minimizing gain identified below is trajectory-wise admissible. Then, for every \(\beta\in(0,1]\),
\begin{equation}
\label{eq:lsof_optimal_cost}
        \inf_F \E\sum_{k=0}^{\infty}\beta^k x_k^2
        =\frac{1}{3(1-\beta/4)}.
\end{equation}
The static two-ReLU output-feedback law \(u=-\sat(1.5y)\) has cost \(1/3+\beta/36\), and hence has strictly smaller cost than every admissible linear static output-feedback controller.
\end{theorem}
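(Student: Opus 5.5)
The first step is to characterize the admissible gains. The condition \(|Fx|\le1\) for all \(|x|\le1\) is equivalent to \(|F|\le1\). Under \(u=Fx\), the closed loop is linear, \(x_{k+1}=(\tfrac32+F)x_k\), so \(x_k=a^k x_0\) with \(a=\tfrac32+F\in[\tfrac12,\tfrac52]\). Since \(\E[x_0^2]=1/3\), the cost is
\[
\E\sum_{k\ge0}\beta^k x_k^2=\frac13\sum_{k\ge0}(\beta a^2)^k .
\]
This equals \(\frac{1}{3(1-\beta a^2)}\) when \(\beta a^2<1\), and \(+\infty\) otherwise. The map \(a\mapsto\beta a^2\) is increasing on \(a\ge0\). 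Hence the infimum over \(a\in[\tfrac12,\tfrac52]\) is attained at \(a=\tfrac12\), that is \(F=-1\). Because \(\beta\le1\), we have \(\beta a^2=\beta/4<1\), so this value is finite and equals \(\frac{1}{3(1-\beta/4)}\). This proves \eqref{eq:lsof_optimal_cost}.

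The second step is to check trajectory-wise admissibility of the minimizer, as the statement requires. With \(F=-1\), \(|x_k|=2^{-k}|x_0|\le1\), so \(|u_k|=|x_k|\le1\) along every trajectory. Imposing trajectory-wise admissibility on all gains only shrinks the feasible set. The minimizer remains feasible, so the infimum is unchanged.

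The third step is the NN cost. Theorem~\ref{thm:longrun_dyn} already gives \(J^{\rm NN}_\beta=1/3+\beta/36\) for \(u=-\sat(1.5y)\) with \(y=x\). It remains to show strict inequality:
\[
\frac13+\frac{\beta}{36}<\frac{1}{3(1-\beta/4)}=\frac13\sum_{k\ge0}(\beta/4)^k .
\]
The right-hand side is at least \(\frac13+\frac{\beta}{12}\). Since \(\frac{\beta}{12}>\frac{\beta}{36}\) for \(\beta>0\), the inequality is strict.

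As a cross-check, \(F=-1\) gives \(\kappa=-1\), \(d=0\) in the notation of Theorem~\ref{thm:longrun_dyn}. The linear SOF class is also a subclass of \(\mcC_{\rm dyn}\) (zero-order controllers), so the lower bound \eqref{eq:dyn_lower_state} independently confirms the separation. The only delicate point is correctly handling unstable gains (\(\beta a^2\ge1\) giving infinite cost) and the monotonicity in \(a\). No real obstacle is expected.
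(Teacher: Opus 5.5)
Your proof is correct and follows the paper's argument: reduce admissibility to \(|F|\le1\), write the cost as a geometric series in \((1.5+F)^2\), minimize \(|1.5+F|\) to get \(F=-1\), verify trajectory-wise admissibility, and import the ReLU cost from Theorem~\ref{thm:longrun_dyn}. The only difference is cosmetic: you obtain strictness by truncating the series at \(\tfrac13+\tfrac{\beta}{12}\) rather than computing the exact gap \(\beta(8+\beta)/(144(1-\beta/4))\), though your cross-check overstates inclusion, since not every gain with \(|F|\le1\) lies in \(\mcC_{\rm dyn}\) (e.g., \(F=1\) violates trajectory-wise admissibility) even though the first-move bound still covers all of them because its proof uses only \(|d|+|\kappa|\le1\).
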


\begin{proof}
The input constraint gives \(|F|\leq1\). The closed-loop state is \(x_k=(1.5+F)^kx_0\). Therefore, for gains whose discounted series is finite,
\begin{equation}
        J_\beta(F)=\frac{\E[x_0^2]}{1-\beta(1.5+F)^2}.
\end{equation}
Gains for which the series diverges have infinite cost and cannot improve the infimum. Among the finite-cost gains, the denominator is maximized, and the cost minimized, by minimizing \(|1.5+F|\) over \([-1,1]\), which gives \(F=-1\). This gain keeps \(|x_k|\leq1\) and \(|u_k|\leq1\) for all \(x_0\in[-1,1]\). This proves \eqref{eq:lsof_optimal_cost}. The ReLU cost is given by Theorem~\ref{thm:longrun_dyn}. The gap is
\begin{equation}
        \frac{1}{3(1-\beta/4)}-\left(\frac13+\frac{\beta}{36}\right)
        =\frac{\beta(8+\beta)}{144(1-\beta/4)}>0.
\end{equation}
\end{proof}

A truly output-feedback version is obtained by embedding the scalar unstable coordinate in a larger system.

\begin{corollary}[An output-feedback embedding]
\label{cor:genuine_of}
Consider
\begin{equation}
\begin{bmatrix}x_{1,k+1}\\x_{2,k+1}\end{bmatrix}
=
\begin{bmatrix}1.5&0\\0&\alpha\end{bmatrix}
\begin{bmatrix}x_{1,k}\\x_{2,k}\end{bmatrix}
+
\begin{bmatrix}1\\0\end{bmatrix}u_k,
\qquad y_k=x_{1,k},
\end{equation}
with \(|\alpha|<1\). If the performance index contains \(\sum_k\beta^k x_{1,k}^2\), and any term involving \(x_2\) is independent of \(u\), then the static implicit output-feedback controller \(u=-\sat(1.5y)\) 
has strictly smaller cost than every admissible linear static output-feedback law \(u=Fy\), in the same sense as Theorem~\ref{thm:lsof_sep}.
\end{corollary}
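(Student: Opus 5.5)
The plan is to reduce the corollary to Theorem~\ref{thm:lsof_sep} by exploiting the block-diagonal, decoupled structure of the embedding. First I will record that the $x_1$-subsystem evolves as $x_{1,k+1}=1.5x_{1,k}+u_k$ and the $x_2$-subsystem as $x_{2,k+1}=\alpha x_{2,k}$, with no coupling in either direction. Since the measurement is $y_k=x_{1,k}$, any static output-feedback law (linear $u=Fy$ or the INC $u=-\sat(1.5y)$) depends only on $x_1$. Consequently the $x_1$-trajectory generated by either controller coincides exactly with the trajectory of the scalar plant \eqref{eq:scalar_plant} under the same law, taking the marginal of $x_{1,0}$ to be ${\rm Unif}[-1,1]$ as in that plant. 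The $x_2$-trajectory is $x_{2,k}=\alpha^k x_{2,0}$ regardless of the controller. Because $|\alpha|<1$, any discounted quadratic term in $x_2$ is finite for square-integrable $x_{2,0}$.

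Second, I will split the performance index as $J=J^{(1)}+J^{(2)}$, where $J^{(1)}=\E\sum_k\beta^k x_{1,k}^2$ and $J^{(2)}$ collects the terms involving $x_2$. By hypothesis $J^{(2)}$ does not depend on $u$, so it is a common constant $c$ for every controller under comparison. It is finite since $|\alpha|<1$ and $\beta\le1$. Admissibility is unchanged as well: the constraint $|Fy|\le1$ for $|y|\le1$ involves only $x_1$, so the benchmark class is exactly the one in Theorem~\ref{thm:lsof_sep}. The INC satisfies $|u|\le1$ by construction. Well-posedness of the INC follows from Theorem~\ref{thm:wellposed}, because its realization from Theorem~\ref{thm:longrun_dyn} is feedforward and hence $\lambda_{\pf}(|W|)=0$.

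Third, I apply Theorem~\ref{thm:lsof_sep} to $J^{(1)}$. That gives $J^{(1)}_{\rm NN}=1/3+\beta/36<1/(3(1-\beta/4))\le J^{(1)}(F)$ for every admissible $F$. Adding the common finite constant $c$ preserves the strict inequality, and the gap is still $\beta(8+\beta)/(144(1-\beta/4))$.

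The main subtlety is stating the hypotheses precisely enough that $J^{(2)}$ is indeed a common finite constant. I will assume $x_{1,0}\sim{\rm Unif}[-1,1]$ and $\E[x_{2,0}^2]<\infty$; the joint law of $(x_{1,0},x_{2,0})$ is then irrelevant because the cost is separable. I will also read ``independent of $u$'' as meaning that the term depends only on the $x_2$-trajectory, which the controller cannot influence. I would also note, though it is not needed for the inequality, that in this embedding the linear law cannot exploit $x_2$: no output-feedback law can, since $x_2$ is unobserved. This is what justifies calling the comparison genuine output feedback.
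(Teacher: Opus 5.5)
Your proposal is correct and is essentially the paper's argument. The paper gives no separate proof of this corollary; it presents it as an immediate consequence of Theorem~\ref{thm:lsof_sep} through exactly the decoupling you describe, in which the $x_1$-loop reproduces the scalar plant and the $x_2$-terms add a controller-independent constant. Your explicit treatment of that constant, including its finiteness, which is where $|\alpha|<1$ enters, and your stated assumptions $x_{1,0}\sim{\rm Unif}[-1,1]$ and $\E[x_{2,0}^2]<\infty$, fill in the only details the paper leaves implicit.
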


\subsection{A General Long-Run Comparison Certificate}
\label{sec:general_certificate}
The scalar separation rests on a general principle: a finite-dimensional dynamic linear output-feedback controller has an affine first move in the measured output, whereas static ReLU controllers can implement saturated continuous piecewise-affine policies. This yields a lower bound for the linear class that can be compared with an LMI-certified upper bound for an admissible implicit neural controller.

Let \(\mcX\subset\R^n\) be a compact initial set, \(x_0\sim\mu\) supported on \(\mcX\), \(y=C_px\), and \(\mcU=\{u:H_uu\leq h_u\}\) be a compact input polytope. Consider the state-only discounted cost
\begin{equation}
\label{eq:general_cost}
        J_\beta(\pi)=\E\sum_{k=0}^{\infty}\beta^k x_k^\top Qx_k,
        \qquad Q\succeq0.
\end{equation}
For a finite-dimensional LTI dynamic output-feedback controller with fixed initial controller state,
\begin{equation}
        \xi_{k+1}=A_c\xi_k+B_cy_k,
        \qquad
        u_k=C_c\xi_k+D_cy_k,
\end{equation}
the first action has the form
\begin{equation}
\label{eq:affine_first_output}
        u_0=D_yC_px_0+d,
\end{equation}
where \(D_y=D_c\) and \(d=C_c\xi_0\). If the benchmark is full-state feedback, one sets \(C_p=I\). Alternatively, replacing \(D_yC_p\) by an arbitrary free matrix \(D\) only relaxes the lower-bound problem and is conservative in favor of the linear benchmark.
Admissibility at the first step requires \((D_y,d)\in\mcA_{\rm aff}^y\), where
\begin{equation}
\label{eq:aff_admissible}
\begin{aligned}
        \mcA_{\rm aff}^y\triangleq
        \{(D_y,d):\;&H_u(D_yC_px+d)\leq h_u,\\
        &\forall x\in\mcX\}.
\end{aligned}
\end{equation}
Define the affine first-move lower bound
\begin{equation}
\label{eq:Jlin_star}
\begin{aligned}
        L_{\rm lin}\triangleq&\;\E[x_0^\top Qx_0] \\
        &+\beta\min_{(D_y,d)\in\mcA_{\rm aff}^y}
        \E\left[\norm{A_px_0+B_p(D_yC_px_0+d)}_Q^2\right].
\end{aligned}
\end{equation}
Because all future state costs are nonnegative, every input-admissible dynamic linear output-feedback controller has \(J_\beta\geq L_{\rm lin}\).

\begin{theorem}[Certified long-run superiority]
\label{thm:general_superiority}
Let \(\pi_{\rm NN}\) be a well-posed static implicit neural controller. Assume that it is admissible on the closed-loop trajectories issuing from \(\mcX\), namely
\begin{equation}
\label{eq:nn_admissibility_general}
        H_u\pi_{\rm NN}(C_px_k)\leq h_u,
        \qquad k\geq0,
        \quad x_0\in\mcX,
\end{equation}
either by construction or by an independent regional certificate. Suppose Theorem~\ref{thm:performance}, with \(R=0\), returns a matrix \(P\succ0\) such that
\begin{equation}
\label{eq:superiority_cert}
        \Tr(PS_0)<L_{\rm lin},
\end{equation}
where \(S_0=\E[x_0x_0^\top]\). Then \(\pi_{\rm NN}\) has strictly smaller expected discounted cost than every admissible finite-dimensional dynamic linear output-feedback controller in the benchmark class.
\end{theorem}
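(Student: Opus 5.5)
The argument chains an upper bound for the neural controller with the lower bound $L_{\rm lin}$ for the linear class. \textbf{Upper bound.} By hypothesis $\pi_{\rm NN}$ is well posed. It is admissible along all trajectories from $\mcX$ by \eqref{eq:nn_admissibility_general}, so the closed-loop trajectory $\{x_k\}$ is defined and feasible. Along it, the performance LMI \eqref{eq:performance_lmi} with $R=0$ holds together with the sector IQC \eqref{eq:iqc_nonnegative}. Exactly as in the proof of Theorem~\ref{thm:performance}, this gives pointwise
\[
\sum_{k=0}^\infty\beta^k x_k^\top Qx_k\le x_0^\top Px_0 .
\]
Taking expectations over $x_0\sim\mu$ yields $J_\beta(\pi_{\rm NN})\le \Tr(PS_0)$. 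If biases are present, the same bound is read through Theorem~\ref{thm:incremental_certificate} with $x_\star=0$; we take the certificate as given in whichever form applies.

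\textbf{Lower bound for the linear class.} Fix any finite-dimensional LTI controller in the benchmark class that is input-admissible for all $x_0\in\mcX$. Since $\xi_0$ is fixed, its first action is $u_0=D_cC_px_0+C_c\xi_0$, which has the affine form \eqref{eq:affine_first_output}. Admissibility at $k=0$ for all $x_0\in\mcX$ means $(D_c,C_c\xi_0)\in\mcA_{\rm aff}^y$. Drop all terms with $k\ge2$, which are nonnegative because $Q\succeq0$. Then
\[
J_\beta\ge \E[x_0^\top Qx_0]+\beta\,\E\norm{A_px_0+B_pu_0}_Q^2\ge L_{\rm lin},
\]
where the last step minimizes over the admissible set. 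The minimum in \eqref{eq:Jlin_star} is attained or at least bounded below. The set $\mcA_{\rm aff}^y$ is closed, convex and nonempty, and the objective is a convex quadratic that is bounded below by $0$. So the infimum is at least $0$, and even if it is only an infimum the inequality still holds.

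\textbf{Conclusion and main subtlety.} Chaining the two bounds gives $J_\beta(\pi_{\rm NN})\le\Tr(PS_0)<L_{\rm lin}\le J_\beta(\pi)$ for every benchmark $\pi$. Hence the inequality is strict, uniformly over the class, and even $J_\beta(\pi_{\rm NN})<\inf_\pi J_\beta(\pi)$ holds. The step needing care is the upper bound. Theorem~\ref{thm:performance} is a global certificate, while input admissibility is only regional. We must check that the plant dynamics actually realized equal $A_px+B_p\pi_{\rm NN}(C_px)$, with no clipping, along every trajectory from $\mcX$. This is exactly what \eqref{eq:nn_admissibility_general} guarantees, so the LMI decrease inequality applies at each step of those trajectories. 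A second minor point is that $\Tr(PS_0)$ requires $S_0$ to be finite. This holds because $\mcX$ is compact.
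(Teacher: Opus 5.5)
Your proof is correct and follows the paper's argument: the upper bound $J_\beta(\pi_{\rm NN})\le\Tr(PS_0)$ from Theorem~\ref{thm:performance}, the first-move lower bound $L_{\rm lin}$ obtained by dropping the nonnegative terms with $k\ge 2$, and then chaining through the strict inequality \eqref{eq:superiority_cert}; your remarks on infimum versus minimum and on admissibility ruling out clipping only spell out details the paper leaves implicit. Your aside on biased controllers is not needed, since the hypothesis already presumes Theorem~\ref{thm:performance} applies, and it would also require the origin to be an equilibrium of the biased loop.
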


\begin{proof}
By admissibility, \(\pi_{\rm NN}\) belongs to the same constrained comparison problem. By Theorem~\ref{thm:performance}, \(\E[J_\beta(\pi_{\rm NN})]\leq\Tr(PS_0)\). By construction of \(L_{\rm lin}\), every admissible dynamic linear output-feedback controller has \(\E[J_\beta]\geq L_{\rm lin}\). The strict inequality \eqref{eq:superiority_cert} proves the claim.
\end{proof}

\begin{remark}[Computation]
If \(\mcX\) is a polytope with vertices \(v_i\), then \eqref{eq:aff_admissible} is equivalent to finitely many inequalities \(H_u(D_yC_pv_i+d)\leq h_u\). The minimization in \eqref{eq:Jlin_star} is then a convex quadratic program for empirical distributions, for distributions represented by moments, and for uniform distributions approximated by quadrature or samples. If the relaxed full-state form \(u_0=Dx_0+d\) is used, the resulting lower bound is no larger than the output-feedback one and therefore still gives a valid, benchmark-favorable comparison.
\end{remark}

\begin{proposition}[Static ReLU representation of PWA policies]
\label{prop:pwa_relu}
Every continuous piecewise-affine policy with finitely many regions can be represented exactly by a finite ReLU network. Consequently, it can be represented as a well-posed static implicit neural controller.
\end{proposition}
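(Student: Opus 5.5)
The proof has two parts: an exact ReLU representation of continuous piecewise-affine (PWA) policies, and an embedding of that representation into the implicit form \eqref{eq:controller} with a nilpotent hidden matrix $W$, which is well posed by Theorem~\ref{thm:wellposed}.

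\emph{ReLU representation.} I will invoke the classical lattice representation of continuous PWA functions (Tarela--Martínez, Ovchinnikov). Let $f:\R^p\to\R^m$ be continuous PWA with finitely many affine pieces $\ell_1,\dots,\ell_M$. Then each output component can be written as
\[
f_j(y)=\max_{a}\min_{i\in S_a}\ell_i^{(j)}(y),
\]
with finitely many index sets $S_a$. This is the step I expect to be the main obstacle, because it uses continuity in an essential way. If I prove it rather than cite it, I would follow Ovchinnikov's argument: for each region, take the set of pieces that lie above the active piece at a point of that region, and use continuity together with the connectedness of segments to show the max--min identity.

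Given the lattice form, max and min of finitely many terms reduce to pairwise operations through the identities
\[
\max\{a,b\}=b+\sigma(a-b),\qquad \min\{a,b\}=a-\sigma(a-b).
\]
Each affine input $\ell_i(y)$ can be carried through the layers by writing it as $\sigma(\ell_i)-\sigma(-\ell_i)$. Alternatively, it can be passed through the skip term $Ky$, or through identity channels of the form $\sigma(t)-\sigma(-t)$. Composing these pairwise operations in a balanced tree produces a finite feedforward ReLU network
\[
h_{k+1}=\sigma(W_k h_k+U_k y+b_k),
\]
whose output is affine in the last layer and in $y$ and equals $f$ exactly. Biases are allowed here, which is consistent with the fact that $b$ and $d$ are present in \eqref{eq:controller}.

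\emph{Implicit embedding.} Stack all hidden activations into $\eta$. The layer-to-layer maps give a strictly block lower-triangular $W$, the direct input injections give $U$, and $b$ collects the biases. The readout gives $V$, $K$, and $d$. Because $W$ is nilpotent, $|W|$ is also strictly block triangular, so
\[
\lambda_{\pf}(|W|)=0<1.
\]
The ReLU satisfies Assumption~\ref{ass:cone}. Theorem~\ref{thm:wellposed} therefore applies, and the fixed point is unique; this is exactly the remark on feedforward networks as implicit controllers. Block forward substitution then recovers the feedforward evaluation, so $\pi(y)=f(y)$ for all $y$, which proves the claim.
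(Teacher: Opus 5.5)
Your proof is correct. The implicit-embedding half is exactly the paper's: stack the activations, use a strictly block lower-triangular $W$, so that $\lambda_{\pf}(|W|)=0$, apply Theorem~\ref{thm:wellposed}, and recover the feedforward evaluation by forward substitution. The routes differ in the representation step. The paper writes each scalar component as a difference of two convex PWA functions, each a finite maximum of affine functions. A single gadget, $\max\{r,s\}=r+\sigma(s-r)$, then suffices, and the network is two max-trees subtracted in the readout. You instead use the Tarela--Mart\'{\i}nez/Ovchinnikov lattice form $f_j=\max_a\min_{i\in S_a}\ell_i^{(j)}$, which needs nested max and min gadgets.

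What your route buys is that it rests on the more primitive fact. The paper asserts the DC decomposition without proof or citation, and that decomposition is usually derived from a lattice or hinging-hyperplane representation anyway. From your form it follows in one line: writing $\min_{i\in S_a}\ell_i^{(j)}=-h_a$ with $h_a=\max_{i\in S_a}(-\ell_i^{(j)})$ convex gives $f_j=\max_a\sum_{b\neq a}h_b-\sum_b h_b$. Your sketch of Ovchinnikov's segment argument also isolates where continuity enters. It likewise shows where convexity of the domain (e.g.\ all of $\R^p$) is needed, a hypothesis the DC route shares implicitly.

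You are also more careful than the paper about carrying un-activated affine terms between layers via identity channels $\sigma(t)-\sigma(-t)$ or skip terms. In the implicit form this bookkeeping is unnecessary, since a strictly block lower-triangular $W$ already allows connections from every earlier layer and $V$ can read all of $\eta$. The paper's route, in turn, buys brevity and a structurally simpler max-only network.
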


\begin{proof}
Each scalar continuous piecewise-affine function can be expressed as the difference of two convex piecewise-affine functions. Each convex piecewise-affine function is the maximum of finitely many affine functions. The identity
\begin{equation}
        \max\{r,s\}=r+\sigma(s-r)
\end{equation}
shows that finite maxima of affine functions can be represented by ReLU networks. Vector outputs are handled componentwise. A feedforward ReLU network is embedded in \eqref{eq:controller} by stacking its layer activations into \(\eta\), yielding a strictly block-triangular matrix \(W\) and hence \(\lambda_{\pf}(|W|)=0\).
\end{proof}

\subsection{General Linear Static Output-Feedback Comparison}
\label{sec:general_sof}
For a general LTI plant with output \(y=C_px\), the implicit output-feedback controller is
\begin{equation}
        \eta=\phi(W\eta+Uy),\qquad u=Ky+V\eta.
\end{equation}
The LMI definitions in Sections~\ref{sec:stability}--\ref{sec:performance} already include this case through \(Z=[UC_p\;W]\), \(F=[A_p+B_pKC_p\;B_pV]\), and \(G=[KC_p\;V]\).

For a linear static output-feedback controller \(u=Fy=FC_px\), the discounted quadratic cost, when the closed-loop matrix \(A_F=A_p+B_pFC_p\) is Schur in the discounted sense, is
\begin{equation}
        J_{\beta,F}(x_0)=x_0^\top P_Fx_0,
\end{equation}
where \(P_F\) solves
\begin{equation}
\label{eq:PF_lyap}
        P_F=Q+C_p^\top F^\top RF C_p+\beta A_F^\top P_FA_F.
\end{equation}
Let \(\mcF_{\rm SOF}\) be the set of linear static output-feedback gains satisfying the prescribed stability and input constraints. If the implicit neural controller admits the certified upper bound \(\Tr(P_{\rm NN}S_0)\) from Theorem~\ref{thm:performance}, and if one can compute or lower-bound
\begin{equation}
        L_{\rm SOF}\leq \inf_{F\in\mcF_{\rm SOF}}\Tr(P_FS_0),
\end{equation}
then
\begin{equation}
        \Tr(P_{\rm NN}S_0)<L_{\rm SOF}
\end{equation}
certifies strict superiority over all gains in \(\mcF_{\rm SOF}\). The scalar theorem of Section~\ref{sec:sof} is the closed-form instance of this comparison.

\section{Numerical Examples}
\label{sec:numerics}
This section reports three reproducible numerical illustrations. The first is a fully certified example with a nonzero hidden interconnection \(W\). The second visualizes the analytic scalar separation of Theorem~\ref{thm:longrun_dyn}. The third is a two-state constrained example; it is useful diagnostically, but the performance number for the saturated law is a sampled estimate and is labeled as such.

\subsection{A certified implicit controller}
\label{sec:numerics_implicit}
Consider the scalar plant
\begin{equation}
\label{eq:implicit_example_plant}
        x_{k+1}=1.2x_k+u_k,
        \qquad y_k=x_k,
\end{equation}
with discounted stage cost \(x_k^2+0.05u_k^2\) and \(\beta=0.98\). The controller has two hidden units, leaky-ReLU activation \(\phi(r)=\max\{r,0.5r\}\), and
\begin{equation}
\label{eq:implicit_example_controller}
\begin{aligned}
        W&=\begin{bmatrix}0.3&0.2\\0.2&0.3\end{bmatrix},
        & U&=\begin{bmatrix}1\\-1\end{bmatrix},\\
        V&=\begin{bmatrix}-0.7&0.7\end{bmatrix},
        & K&=0.
\end{aligned}
\end{equation}
Thus \(\lambda_{\pf}(|W|)=\norm{W}_\infty=0.5<1\), so the hidden fixed point is unique. Over \(x\in[-1,1]\), fixed-point iteration from zero reached tolerance \(10^{-10}\) in at most 25 iterations.

The leaky ReLU satisfies the sector condition with \(D_0=0.5I\), \(D_1=I\). The feasibility problem is two-dimensional in this example when \(\Lambda=\lambda I_2\) is imposed. A direct scalar search over \((P,\lambda)\), followed by a double-precision eigenvalue check of the rounded matrices, gives
\begin{equation}
\label{eq:implicit_example_certificate}
\begin{gathered}
        P=1.3006,\qquad \Lambda=0.7749 I_2,\\
        \lambda_{\max}(\mathrm{LMI})=-2.23\times10^{-5}.
\end{gathered}
\end{equation}
Hence
\begin{equation}
        J_\beta(x_0)\leq 1.3006 x_0^2,
        \qquad
        \E[J_\beta]\leq0.4335
\end{equation}
for \(x_0\sim{\rm Unif}[-1,1]\). Table~\ref{tab:implicit_reproducibility} gives the reproducibility data used for this certificate. No training optimizer is used in this example; it is a feasibility demonstration for the analysis certificate with \(W\neq0\).

\begin{table}[t]
\caption{Reproducibility data for the nonzero-\(W\) certificate.}
\label{tab:implicit_reproducibility}
\centering
\small
\begin{tabular}{@{}p{0.43\columnwidth}p{0.49\columnwidth}@{}}
\toprule
quantity & value \\
\midrule
\(\lambda_{\pf}(|W|)\) & \(0.5\) \\
fixed-point method & Picard iteration from zero \\
fixed-point tolerance & \(10^{-10}\) \\
max. iterations on \([-1,1]\) & \(25\) \\
LMI search & direct 2-D search; \(\Lambda=\lambda I_2\) \\
reported rounded values & \(P=1.3006,\;\lambda=0.7749\) \\
max. eigenvalue after rounding & \(-2.23\times10^{-5}\) \\
\bottomrule
\end{tabular}
\end{table}

The exact cost can also be computed analytically. For \(x>0\), the active slope matrix is \(D_+=\diag(1,0.5)\), and
\begin{equation}
        \eta=(I-D_+W)^{-1}D_+Ux
        =\begin{bmatrix}1.3043\\-0.4348\end{bmatrix}x.
\end{equation}
Thus \(u=V\eta=-1.2174x\) and \(x^+=-0.01739x\). For \(x<0\), the active slope matrix is \(D_-=\diag(0.5,1)\), which yields the same effective gain by symmetry. Therefore
\begin{equation}
        J_\beta(x_0)=
        \frac{1+0.05(1.2174)^2}{1-0.98(0.01739)^2}\,x_0^2
        =1.0744x_0^2.
\end{equation}
With the same \(U,V,K\) but \(W=0\), the effective gain is \(-1.05\), the closed-loop multiplier is \(0.15\), and the exact cost is \(1.0789x_0^2\); the corresponding LMI value is \(P=1.3570\), hence \(\E[J_\beta]\leq0.4523\). The improvement is modest, however, the point of the example is not to establish a broad performance advantage of implicit hidden feedback, but to demonstrate that the LMI/IQC machinery certifies a truly implicit fixed-point controller, not only feedforward special cases (those with $W=0$).

\subsection{Scalar separation visualized}
\begin{figure*}[htb]
\centering
\includegraphics[width=0.94\textwidth]{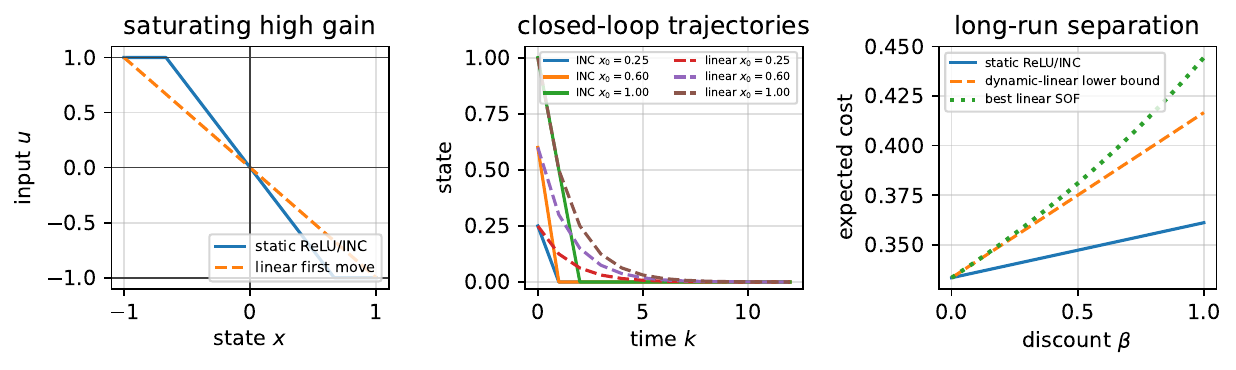}
\caption{Analytic scalar separation. Left: the static ReLU/INC law uses high gain near the origin and saturates at the admissible input limits. Middle: representative closed-loop trajectories. Right: exact discounted costs and lower bounds as functions of the discount factor.}
\label{fig:scalar_separation}
\end{figure*}

Figure~\ref{fig:scalar_separation} illustrates the mechanism behind Theorem~\ref{thm:longrun_dyn}. The nonlinear controller uses the linear gain \(-1.5\) near the origin and saturates at the input limits near the edge of the interval. The best admissible linear first move has slope \(-1\). The middle panel shows that the static ReLU controller reaches the origin in at most two steps for the displayed initial conditions, while the linear feedback contracts more slowly. The right panel compares the exact discounted costs: the static ReLU/INC cost remains strictly below both the dynamic-linear lower bound and the best linear static output-feedback cost for every discount factor shown.

\subsection{Two-state constrained LTI example}
We next consider
\begin{equation}
\label{eq:two_state_plant}
        A_p=\begin{bmatrix}1.10&1.00\\0&1.02\end{bmatrix},
        \qquad
        B_p=\begin{bmatrix}0.20\\0.50\end{bmatrix},
\end{equation}
with \(|u_k|\leq1\), \(x_0\sim{\rm Unif}([-0.8,0.8]^2)\), discount \(\beta=0.98\), and stage cost \(x_k^\top x_k+0.05u_k^2\). The nonlinear controller is the saturated discounted-LQR law
\begin{equation}
\label{eq:two_state_saturated_lqr}
        u_k=\sat(K_{\rm LQR}x_k),
        \qquad K_{\rm LQR}=\begin{bmatrix}-1.3749&-2.6490\end{bmatrix}.
\end{equation}
This law is a static ReLU controller, hence a special case of the implicit architecture with \(W=0\), because
\begin{equation}
        \sat(v)=\sigma(v+1)-\sigma(v-1)-1.
\end{equation}

As a linear benchmark, we performed a dense two-variable search over static gains \(F\) satisfying
\begin{equation}
        |Fx|\leq1,
        \qquad \forall x\in[-0.8,0.8]^2,
\end{equation}
or equivalently \(0.8(|F_1|+|F_2|)\leq1\). For each feasible linear gain, the discounted cost was computed exactly by the Lyapunov equation
\begin{equation}
        P_F=I+0.05F^\top F+\beta(A_p+B_pF)^\top P_F(A_p+B_pF),
\end{equation}
and the expected cost was \(\Tr(P_FS_0)\), where \(S_0=(0.8)^2I/3\). The best gain found in this dense search over the feasible diamond is
\begin{equation}
        F^\star=\begin{bmatrix}-0.3535&-0.8965\end{bmatrix},
        \qquad J(F^\star)=3.3472.
\end{equation}
Scaling the unconstrained LQR gain just enough to satisfy the same initial-set input bound gives cost \(3.5639\). For the reported linear gain, the numerical check \(\max_{0\leq k\leq100}0.8\norm{F(A_p+B_pF)^k}_1=1.0000\) is attained at \(k=0\); this is a consistency check for the plotted initial-set trajectories, not a proof of global optimality of the search. By Monte Carlo integration with \(2\times10^5\) initial states, the saturated static INC law gives
\begin{equation}
        J_{\rm INC}=1.3579\pm0.0071,
\end{equation}
where the interval is two standard errors. The improvement is not due to a larger sampled input: both controllers respect \(|u_k|\leq1\) on the simulated grid shown in Fig.~\ref{fig:two_state_numerics}. The gain comes from using the unconstrained LQR action when it is feasible and clipping only where the actuator bound is active. Since the nonlinear cost is estimated by sampling and the linear gain is the best found in a dense search rather than a certified global optimizer, this experiment should be read as an illustration of the constrained gain-scheduling mechanism, rather than as a theorem comparable to Theorem~\ref{thm:longrun_dyn}.

\begin{figure*}[t]
\centering
\includegraphics[width=0.88\textwidth]{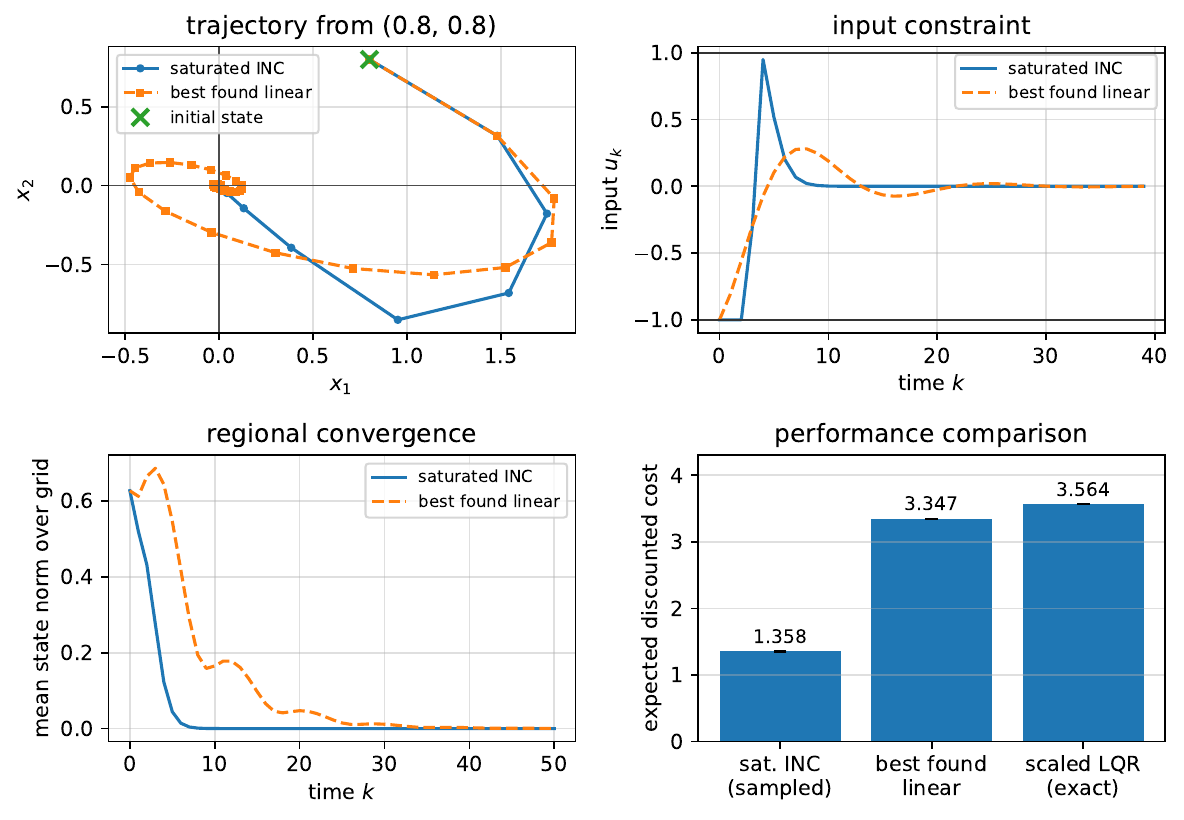}
\caption{Two-state constrained LTI example. Top left: representative state trajectory from \((0.8,0.8)\). Top right: input traces, showing saturation of the INC and feasibility of the linear benchmark. Bottom left: mean state norm over a uniform grid of initial states. Bottom right: expected discounted cost. The saturated static INC has lower sampled cost than the best static linear gain found in the dense search and a scaled LQR gain on this benchmark.}
\label{fig:two_state_numerics}
\end{figure*}


\section{Conclusion}
This paper presented implicit neural controllers as state-space-like objects for certified neural feedback. The central point is that the implicit representation turns a neural controller into a familiar control interconnection: trainable linear equations closed through a known static nonlinearity. This exposes the hidden algebraic loop, gives checkable well-posedness conditions, and allows Lyapunov/IQC certificates to be written directly on the plant-controller interconnection.

For the static LTI setting studied here, Perron--Frobenius and norm conditions ensure that the controller fixed point is well posed. Static and incremental sector IQCs lead to LMI tests for exponential stability and discounted infinite-horizon quadratic performance, including biased controllers around arbitrary equilibria. The constrained separation theorems show that static nonlinear neural feedback can be provably superior to dynamic linear feedback in a long-run performance index. The key structural advantage is actuator-aware gain scheduling: a ReLU controller can use high gain near the origin and saturate near the boundary, whereas a globally admissible linear controller must use a single conservative affine move.

The presented theory clearly  has its limitations. The LMI/IQC conditions are analysis certificates, not a convex synthesis theory. The train--certify--retrain loop can propose useful controllers, but feasibility is checked only after the controller is fixed; hence, it is an heuristic design approach, with no guarantee of correct outcome. Also, the global sector LMIs can be conservative, especially for ReLU or saturation when the open-loop plant is unstable: regional multipliers, activation-pattern arguments, or explicit input-admissibility checks are then needed. 
Several directions also remain open: scalable design conditions with reduced conservatism, regional certificates for implicit controllers with \(W\neq0\), disturbance and robustness margins, observer-plus-implicit-readout architectures, and extensions to dynamic implicit neural controllers.

\appendices

\section{Bias-Augmented ReLU Realization of Saturation}
\label{app:saturation_relu}
The saturation used in \eqref{eq:relu_sat_controller} is realized as an implicit model with two hidden units:
\begin{equation}
        \eta_1=\sigma(1.5x+1),\qquad
        \eta_2=\sigma(1.5x-1),
\end{equation}
\begin{equation}
        u=1-\eta_1+\eta_2.
\end{equation}
Thus \(W=0\), \(U=[1.5\;1.5]^\top\), \(b=[1\;-1]^\top\), \(V=[-1\;1]\), \(K=0\), and \(d=1\). Since \(W=0\), well-posedness is immediate.

\section{Equivalent IQC Matrix Form}
The generalized sector IQC can be written as
\begin{equation}
        2(\eta-D_0z)^\top\Lambda(D_1z-\eta)\geq0.
\end{equation}
Equivalently,
\begin{equation}
        \begin{bmatrix}z\\\eta\end{bmatrix}^\top
        \begin{bmatrix}
        -2D_0\Lambda D_1 & D_0\Lambda+\Lambda D_1\\
        \Lambda D_1+D_0\Lambda & -2\Lambda
        \end{bmatrix}
        \begin{bmatrix}z\\\eta\end{bmatrix}\geq0,
\end{equation}
where the diagonal matrices commute. Since \(z=Zs\) and \(\eta=Hs\), this is exactly \(s^\top M_\Lambda s\geq0\). In the special case \(D_0=0\), \(D_1=I\), this reduces to
\begin{equation}
        \begin{bmatrix}z\\\eta\end{bmatrix}^\top
        \begin{bmatrix}0&\Lambda\\\Lambda&-2\Lambda\end{bmatrix}
        \begin{bmatrix}z\\\eta\end{bmatrix}\geq0.
\end{equation}


\begin{thebibliography}{99}

\bibitem{elghaoui2020implicit}
L.~El~Ghaoui, F.~Gu, B.~Travacca, A.~Askari, and A.~Y.~Tsai,
``Implicit deep learning,''
\emph{SIAM Journal on Mathematics of Data Science}, vol. 3, no. 3, pp. 930--958, 2021.

\bibitem{narendra1990identification}
K.~S. Narendra and K.~Parthasarathy, ``Identification and control of dynamical systems using neural networks,'' \emph{IEEE Transactions on Neural Networks}, vol.~1, no.~1, pp. 4--27, 1990.

\bibitem{miller1995neural}
W.~T. Miller, P.~J. Werbos, and R.~S. Sutton, Eds., \emph{Neural Networks for Control}. Cambridge, MA, USA: MIT Press, 1995.

\bibitem{lewis1999neural}
F.~L. Lewis, S.~Jagannathan, and A.~Yesildirak, \emph{Neural Network Control of Robot Manipulators and Nonlinear Systems}. London, U.K.: Taylor \& Francis, 1999.

\bibitem{boettcher2026control}
L.~B{\"o}ttcher, ``Control of dynamical systems with neural networks,'' \emph{Nonlinear Dynamics}, vol.~114, Art. no.~79, 2026, doi: 10.1007/s11071-025-11937-z.

\bibitem{mnih2015human}
V.~Mnih, K.~Kavukcuoglu, D.~Silver, A.~A. Rusu, J.~Veness, M.~G. Bellemare, A.~Graves, M.~Riedmiller, A.~K. Fidjeland, G.~Ostrovski, \emph{et al.}, ``Human-level control through deep reinforcement learning,'' \emph{Nature}, vol. 518, pp. 529--533, 2015.

\bibitem{degrave2022magnetic}
J.~Degrave, F.~Felici, J.~Buchli, M.~Neunert, B.~Tracey, F.~Carpanese, T.~Ewalds, R.~Hafner, A.~Abdolmaleki, D.~de~las~Casas, \emph{et al.}, ``Magnetic control of tokamak plasmas through deep reinforcement learning,'' \emph{Nature}, vol. 602, pp. 414--419, 2022.

\bibitem{saintdonat1991neural}
J.~Saint-Donat, N.~Bhat, and T.~J. McAvoy, ``Neural net based model predictive control,'' \emph{International Journal of Control}, vol.~54, no.~6, pp. 1453--1468, 1991.

\bibitem{draeger1995mpc}
A.~Draeger, S.~Engell, and H.~Ranke, ``Model predictive control using neural networks,'' \emph{IEEE Control Systems Magazine}, vol.~15, no.~5, pp. 61--66, 1995.

\bibitem{amos2017optnet}
B. Amos and J. Z. Kolter, ``OptNet: Differentiable optimization as a layer in neural networks,'' in \emph{Proc. International Conference on Machine Learning}, 2017, pp. 136--145.

\bibitem{amos2018differentiablempc}
B.~Amos, I.~Jimenez, J.~Sacks, B.~Boots, and J.~Z. Kolter, ``Differentiable MPC for end-to-end planning and control,'' in \emph{Advances in Neural Information Processing Systems}, 2018, pp. 8289--8300.

\bibitem{chen2018neuralode}
R. T. Q. Chen, Y. Rubanova, J. Bettencourt, and D. Duvenaud, ``Neural ordinary differential equations,'' in \emph{Advances in Neural Information Processing Systems}, 2018.

\bibitem{kaiser2018sindy}
E.~Kaiser, J.~N. Kutz, and S.~L. Brunton, ``Sparse identification of nonlinear dynamics for model predictive control in the low-data limit,'' \emph{Proceedings of the Royal Society A}, vol. 474, no. 2219, 2018, Art. no. 20180335.

\bibitem{bai2019deep}
S.~Bai, J.~Z. Kolter, and V.~Koltun, ``Deep equilibrium models,'' in \emph{Advances in Neural Information Processing Systems}, 2019.

\bibitem{boyd1994lmi}
S. Boyd, L. El Ghaoui, E. Feron, and V. Balakrishnan, \emph{Linear Matrix Inequalities in System and Control Theory}. Philadelphia, PA, USA: SIAM, 1994.

\bibitem{megretski1997iqc}
A. Megretski and A. Rantzer, ``System analysis via integral quadratic constraints,'' \emph{IEEE Transactions on Automatic Control}, vol. 42, no. 6, pp. 819--830, 1997.

\bibitem{khalil2002nonlinear}
H. K. Khalil, \emph{Nonlinear Systems}, 3rd ed. Upper Saddle River, NJ, USA: Prentice Hall, 2002.

\bibitem{dullerud2000robust}
G. E. Dullerud and F. Paganini, \emph{A Course in Robust Control Theory: A Convex Approach}. New York, NY, USA: Springer, 2000.

\bibitem{raghunathan2018sdp}
A. Raghunathan, J. Steinhardt, and P. Liang, ``Semidefinite relaxations for certifying robustness to adversarial examples,'' in \emph{Advances in Neural Information Processing Systems}, 2018, pp. 10877--10887.

\bibitem{fazlyab2022safety}
M.~Fazlyab, M.~Morari, and G.~J. Pappas, ``Safety verification and robustness analysis of neural networks via quadratic constraints and semidefinite programming,'' \emph{IEEE Transactions on Automatic Control}, vol.~67, no.~1, pp. 1--15, 2022.

\bibitem{yin2022qc}
H.~Yin, P.~Seiler, and M.~Arcak, ``Stability analysis using quadratic constraints for systems with neural network controllers,'' \emph{IEEE Transactions on Automatic Control}, vol.~67, no.~4, pp. 1980--1987, 2022.

\bibitem{junnarkar2024dissipativity}
N.~Junnarkar, M.~Arcak, and P.~Seiler, ``Synthesizing neural network controllers with closed-loop dissipativity guarantees,'' \emph{Automatica}, to appear, arXiv:2404.07373, 2026.

\bibitem{gilbert1991linear}
E.~G. Gilbert and K.~T. Tan, ``Linear systems with state and control constraints: The theory and application of maximal output admissible sets,'' \emph{IEEE Transactions on Automatic Control}, vol.~36, no.~9, pp. 1008--1020, 1991.

\bibitem{tarbouriech2011stability}
S.~Tarbouriech, G.~Garcia, J.~M.~G. da~Silva~Jr., and I.~Queinnec, \emph{Stability and Stabilization of Linear Systems with Saturating Actuators}. London, U.K.: Springer, 2011.

\bibitem{bemporad2002explicit}
A. Bemporad, M. Morari, V. Dua, and E. N. Pistikopoulos, ``The explicit linear quadratic regulator for constrained systems,'' \emph{Automatica}, vol. 38, no. 1, pp. 3--20, 2002.

\bibitem{rawlings2017mpc}
J.~B. Rawlings, D.~Q. Mayne, and M.~M. Diehl, \emph{Model Predictive Control: Theory, Computation, and Design}, 2nd ed. Madison, WI, USA: Nob Hill, 2017.

\bibitem{boyd2004convex}
S. Boyd and L. Vandenberghe, \emph{Convex Optimization}. Cambridge, U.K.: Cambridge University Press, 2004.

\bibitem{horn2012matrix}
R. A. Horn and C. R. Johnson, \emph{Matrix Analysis}, 2nd ed. Cambridge, U.K.: Cambridge University Press, 2012.

\end{thebibliography}
\end{document}